\newcommand{\vc}[1]{\boldsymbol{#1}}
\newcommand{\eps}{\varepsilon}
\newcommand{\No}{\mathcal{N}}
\newcommand{\HNo}{\mathcal{H}}
\renewcommand{\Pr}{\mathbb{P}}
\newcommand{\expn}{\mathbb{E}}
\newcommand{\cD}{\mathcal{D}}
\begin{document}

\mainmatter

\title{Hypercube LSH for approximate near neighbors}
\titlerunning{Hypercube LSH for approximate near neighbors}
\toctitle{Lecture Notes in Computer Science}

\author{Thijs Laarhoven}
\authorrunning{Thijs Laarhoven}
\tocauthor{Thijs Laarhoven}

\institute{IBM Research\\R\"{u}schlikon, Switzerland\\ \texttt{mail@thijs.com}}

\maketitle

\begin{abstract}
A celebrated technique for finding near neighbors for the angular distance involves using a set of \textit{random} hyperplanes to partition the space into hash regions [Charikar, STOC 2002]. Experiments later showed that using a set of \textit{orthogonal} hyperplanes, thereby partitioning the space into the Voronoi regions induced by a hypercube, leads to even better results [Terasawa and Tanaka, WADS 2007]. However, no theoretical explanation for this improvement was ever given, and it remained unclear how the resulting hypercube hash method scales in high dimensions.

In this work, we provide explicit asymptotics for the collision probabilities when using hypercubes to partition the space. For instance, two near-orthogonal vectors are expected to collide with probability $(\frac{1}{\pi})^{d + o(d)}$ in dimension $d$, compared to $(\frac{1}{2})^d$ when using random hyperplanes. Vectors at angle $\frac{\pi}{3}$ collide with probability $(\frac{\sqrt{3}}{\pi})^{d + o(d)}$, compared to $(\frac{2}{3})^d$ for random hyperplanes, and near-parallel vectors collide with similar asymptotic probabilities in both cases.

For $c$-approximate nearest neighbor searching, this translates to a decrease in the exponent $\rho$ of locality-sensitive hashing (LSH) methods of a factor up to $\log_2(\pi) \approx 1.652$ compared to hyperplane LSH. For $c = 2$, we obtain $\rho \approx 0.302 + o(1)$ for hypercube LSH, improving upon the $\rho \approx 0.377$ for hyperplane LSH. We further describe how to use hypercube LSH in practice, and we consider an example application in the area of lattice algorithms.

\keywords{(approximate) near neighbors, locality-sensitive hashing, large deviations, dimensionality reduction, lattice algorithms}
\end{abstract}


\section{Introduction}

\paragraph*{Finding (approximate) near neighbors.} A key computational problem in various research areas, including machine learning, pattern recognition, data compression, coding theory, and cryptanalysis~\cite{shakhnarovich05, bishop06, dubiner10, duda00, may15, laarhoven15crypto}, is finding near neighbors: given a data set $\cD \subset \mathbb{R}^d$ of cardinality $n$, design a data structure and preprocess $\cD$ in a way that, when given a query vector $\vc{q} \in \mathbb{R}^d$, one can efficiently find a near point to $\vc{q}$ in $\cD$. Due to the ``curse of dimensionality''~\cite{indyk98} this problem is known to be hard to solve exactly (in the worst case) in high dimensions $d$, so a common relaxation of this problem is the \textit{$(c,r)$-approximate near neighbor problem} ($(c,r)$-ANN): given that the nearest neighbor lies at distance at most $r$ from $\vc{q}$, design an algorithm that finds an element $\vc{p} \in \cD$ at distance at most $c \cdot r$ from $\vc{q}$. 

\paragraph{Locality-sensitive hashing (LSH) and filtering (LSF).} A prominent class of algorithms for finding near neighbors in high dimensions is formed by locality-sensitive hashing (LSH)~\cite{indyk98} and locality-sensitive filtering (LSF)~\cite{becker16lsf}. These solutions are based on partitioning the space into regions, in a way that nearby vectors have a higher probability of ending up in the same hash region than distant vectors. By carefully tuning (i) the number of hash regions per hash table, and (ii) the number of randomized hash tables, one can then guarantee that with high probability (a) nearby vectors will \textit{collide} in at least one of the hash tables, and (b) distant vectors will not collide in any of the hash tables. For LSH, a simple lookup in all of $\vc{q}$'s hash buckets then provides a fast way of finding near neighbors to $\vc{q}$, while for LSF the lookups are slightly more involved. For various metrics, LSH and LSF currently provide the best performance in high dimensions~\cite{andoni15, becker16lsf, andoni17, christiani17}.

\paragraph*{Near neighbors on the sphere.} 
In this work we will focus on the near neighbor problem under the \textit{angular distance}, where two vectors $\vc{x}, \vc{y}$ are considered nearby iff their common angle $\theta$ is small~\cite{charikar02, sundaram13, schmidt14, andoni15cp}. This equivalently corresponds to near neighbor searching for the $\ell_2$-norm, where the entire data set is assumed to lie on a sphere. A special case of $(c,r)$-ANN on the sphere, often considered in the literature, is the \textit{random} case $r = \frac{1}{c}\sqrt{2}$ and $c \cdot r = \sqrt{2}$, in part due to a reduction from near neighbor under the Euclidean metric for general data sets to $(c,r)$-ANN on the sphere with these parameters~\cite{andoni15}. 

\subsection{Related work}

\paragraph*{Upper bounds.} Perhaps the most well-known and widely used solution for ANN for the angular distance is Charikar's hyperplane LSH~\cite{charikar02}, where a set of \textit{random} hyperplanes is used to partition the space into regions. Due to its low computational complexity and the simple form of the collision probabilities (with no hidden order terms in $d$), this method is easy to instantiate in practice and commonly achieves the best performance out of all LSH methods when $d$ is not too large. For large $d$, both spherical cap LSH~\cite{andoni14, andoni15} and cross-polytope LSH~\cite{terasawa07, eshghi08, andoni15cp, kennedy17} are known to perform better than hyperplane LSH. Experiments from~\cite{terasawa07, terasawa09} showed that using \textit{orthogonal} hyperplanes, partitioning the space into Voronoi regions induced by the vertices of a hypercube, also leads to superior results compared to hyperplane LSH; however, no theoretical guarantees for the resulting hypercube LSH method were given, and it remained unclear whether the improvement persists in high dimensions.

\paragraph{Lower bounds.} For the case of \textit{random} data sets, lower bounds have also been found, matching the performance of spherical cap and cross-polytope LSH for large $c$~\cite{motwani07, odonnell11, andoni15cp}. These lower bounds are commonly in a model where it is assumed that collision probabilities are ``not too small'', and in particular not exponentially small in $d$. Therefore it is not clear whether one can further improve upon cross-polytope LSH when the number of hash regions is exponentially large, which would for instance be the case for hypercube LSH. Together with the experimental results from~\cite{terasawa07, terasawa09}, this naturally begs the question: how efficient is hypercube LSH? Is it better than hyperplane LSH and/or cross-polytope LSH? And how does hypercube LSH compare to other methods in practice?

\subsection{Contributions}

\paragraph{Hypercube LSH.} By carefully analyzing the collision probabilities for hypercube LSH using results from large deviations theory, we show that hypercube LSH is indeed different from, and superior to hyperplane LSH for large $d$. The following main theorem states the asymptotic form of the collision probabilities when using hypercube LSH, which are also visualized in Figure~\ref{fig:main} in comparison with hyperplane LSH.

\begin{theorem} [Collision probabilities for hypercube LSH] \label{thm:main}
Let $\vc{X}, \vc{Y} \sim \No(0,1)^d$, let $\theta \in [0, \pi]$ denote the angle between $\vc{X}$ and $\vc{Y}$, and let $p(\theta)$ denote the probability that $\vc{X}$ and $\vc{Y}$ are mapped to the same hypercube hash region. For $\theta \in (0, \arccos \frac{2}{\pi})$ (respectively $\theta \in (\arccos \frac{2}{\pi}, \frac{\pi}{3})$), let $\beta_0 \in (1, \infty)$ (resp.\ $\beta_1 \in (1, \infty)$) be the unique solution to:
\begin{align}
\arccos\left(\frac{-1}{\beta_0}\right) = \frac{(\beta_0 - \cos \theta) \sqrt{\beta_0^2 - 1}}{\beta_0 (\beta_0 \cos \theta - 1)} \, , \qquad \arccos\left(\frac{1}{\beta_1}\right) = \frac{(\beta_1 + \cos \theta) \sqrt{\beta_1^2-1}}{\beta_1 (\beta_1 \cos \theta + 1)} \, . \label{eq:beta}
\end{align}
%
Then, as $d$ tends to infinity, $p(\theta)$ satisfies:
\begin{align}
p(\theta) = \begin{cases}
\left(\displaystyle\frac{(\beta_0 - \cos \theta)^2 }{\pi \beta_0 (\beta_0 \cos \theta - 1) \sin \theta}\right)^{d + o(d)}, & \qquad \text{if } \theta \in [0, \arccos \tfrac{2}{\pi}]; \\[3ex]
\left(\displaystyle\frac{(\beta_1 + \cos \theta)^2}{\pi \beta_1 (\beta_1 \cos \theta + 1) \sin \theta}\right)^{d + o(d)}, & \qquad \text{if } \theta \in [\arccos \tfrac{2}{\pi}, \tfrac{\pi}{3}]; \\[3ex]
\left(\displaystyle\frac{1 + \cos \theta}{\pi \sin \theta}\right)^{d + o(d)}, & \qquad \text{if } \theta \in [\tfrac{\pi}{3}, \tfrac{\pi}{2}); \\[3ex]
0, & \qquad \text{if } \theta \in [\tfrac{\pi}{2}, \pi].
\end{cases}
\end{align}
\end{theorem}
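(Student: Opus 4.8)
The plan is to reduce the collision probability to a geometric event on the sphere and then apply large deviations / saddle-point asymptotics. The hypercube hash region of a point $\vc{z} \in \mathbb{R}^d$ is determined by the signs of its coordinates together with the index of the largest coordinate in absolute value (equivalently, $\vc{z}$ is assigned to the nearest vertex of the hypercube $\{\pm 1\}^d$ after normalization, which amounts to recording $\operatorname{sign}(z_i)$ for all $i$ and $\arg\max_i |z_i|$). So two Gaussians $\vc{X}, \vc{Y}$ collide iff (i) they share the same coordinate of maximal absolute value, say coordinate $1$, and (ii) they agree on the sign of that coordinate, and — wait, actually for the standard hypercube partition one needs them to land in the same Voronoi cell of $\{\pm1\}^d/\sqrt d$, which is governed by which single coordinate dominates and its sign; I would first pin down the exact combinatorial description from the paper's setup and then, by symmetry over the $d$ coordinates and the $2$ signs, write $p(\theta) = 2d \cdot \Pr[\vc{X}, \vc{Y} \text{ both have coordinate } 1 \text{ dominant and positive}]$. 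The factor $2d$ is subexponential and disappears into the $o(d)$, so it suffices to estimate the probability of this single-cell event.

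**Conditioning and the large-deviations step.** Next I would condition on the pair $(X_1, Y_1) = (a, b)$ of first coordinates. Given this, the remaining coordinates $(X_i, Y_i)_{i \ge 2}$ are $d-1$ i.i.d.\ copies of a bivariate Gaussian whose correlation is $\cos\theta$ — more precisely, conditioning on the angle $\theta$ between the full vectors couples things, so the cleanest route is to first work with $\vc{X} \sim \No(0,1)^d$ and $\vc{Y} = \cos\theta \cdot \vc{X} + \sin\theta \cdot \vc{Z}$ with $\vc{Z} \perp \vc{X}$, or to condition on $\|\vc{X}\|, \|\vc{Y}\|$ concentrating at $\sqrt d$. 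The event ``coordinate $1$ dominant'' becomes $\max_{i \ge 2} X_i^2 \le X_1^2$ and $\max_{i\ge2} Y_i^2 \le Y_1^2$; since the max of $d$ sub-Gaussians is $\Theta(\sqrt{\log d})$ while $X_1$ is $\Theta(1)$, this forces $|X_1|, |Y_1|$ to scale like $\sqrt{2\log d} \cdot (\text{something})$ — no wait, to get an exponentially small probability we actually need $X_1^2, Y_1^2$ to be of order $d$, i.e.\ $X_1 \approx \alpha\sqrt d$. So I would set $X_1 = \alpha \sqrt d$, $Y_1 = \gamma\sqrt d$ and compute the exponential rate: the density cost of forcing $(X_1,Y_1) \approx (\alpha\sqrt d, \gamma\sqrt d)$ under the bivariate Gaussian with correlation $\cos\theta$ is $\exp(-\tfrac{d}{2}\cdot \tfrac{\alpha^2 - 2\alpha\gamma\cos\theta + \gamma^2}{\sin^2\theta})$, while the probability that all other coordinates satisfy $X_i^2 \le \alpha^2 d$ and $Y_i^2 \le \gamma^2 d$ is $(1 - e^{-\Theta(d)})^d \approx 1$ as long as $\alpha,\gamma$ are bounded below — hmm, that gives probability $\to 1$, which is too crude. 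The resolution: after normalization the relevant threshold is not $|X_i| \le |X_1|$ with $X_1 = \alpha\sqrt d$ but rather the correct Voronoi condition, and the constraint interacts with $\|\vc X\|^2 = \sum X_i^2$ being pinned at $d$; so I'd instead parametrize by $\beta := $ the ratio $X_1 / (\text{typical coordinate scale})$ appropriately and track the full rate function, which is exactly where the auxiliary variable $\beta_0$ (resp.\ $\beta_1$) in equation~\eqref{eq:beta} comes from.

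**Identifying the optimizer.** The heart of the argument is the optimization: the exponential rate of $p(\theta)$ is $\exp(-d \cdot I(\theta))$ where $I(\theta) = \inf_{(\alpha,\gamma)} \big[\text{cost of } (X_1,Y_1)=(\alpha\sqrt d,\gamma\sqrt d)\big]$ subject to the feasibility of the remaining coordinates, and Lagrange/KKT stationarity in $(\alpha,\gamma)$ yields the transcendental equation defining $\beta_0$ or $\beta_1$ (the switch at $\theta = \arccos\tfrac 2\pi$ reflecting which of two boundary regimes the optimizer falls into, and the further switch at $\theta = \tfrac\pi 3$ to the clean closed form $\tfrac{1+\cos\theta}{\pi\sin\theta}$ being the degenerate case $\beta_1 \to 1$). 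For $\theta \ge \tfrac\pi 2$ the sign-agreement event on the dominant coordinate is impossible in the limit (antipodal-ish vectors can't share a dominant-coordinate sign with positive exponential probability), giving $p(\theta) = 0$. I expect the computation of the single-coordinate density cost and the reduction to a one-parameter optimization to be routine Gaussian analysis; \textbf{the main obstacle is the large-deviations bookkeeping for the ``all other coordinates are dominated'' event} — controlling the joint constraint $\max_{i\ge2}(X_i^2, \text{or the Voronoi analogue})$ uniformly while $(X_1,Y_1)$ is atypically large, and showing the rate is exactly additive so that the infimum over $(\alpha,\gamma)$ (equivalently over $\beta$) gives a matching upper and lower bound up to $e^{o(d)}$. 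I would handle the upper bound by a union bound over the $2d$ cells plus Chernoff on the conditioned coordinates, and the lower bound by restricting to the event that $(X_1,Y_1)$ lies in a small box around the optimizer and all other coordinates lie in their (probability-$\Theta(1)$) dominated region, then verifying the two rates coincide; finally I'd check that the optimizing $\beta$ is well-defined and unique in $(1,\infty)$ by a monotonicity argument on~\eqref{eq:beta}, and simplify the resulting rate to the stated closed forms.
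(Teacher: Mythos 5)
Your proposal founders on the very first step: you have misidentified the hash partition. The Voronoi cell of a vertex $v \in \{\pm 1\}^d$ of the hypercube is simply the orthant $\{\vc{z} : \operatorname{sign}(z_i) = v_i \text{ for all } i\}$, since all vertices have equal norm and $\arg\max_v \langle \vc{z}, v\rangle$ is attained coordinatewise. There is no ``dominant coordinate'' in the cell description --- that is the \emph{cross-polytope} partition (vertices $\pm \vc{e}_i$, $2d$ cells). The paper's hash is $h_A(\vc{x}) = (\operatorname{sign}((A\vc{x})_1), \dots, \operatorname{sign}((A\vc{x})_d))$, with $2^d$ cells, so the correct reduction is $p(\theta) = 2^d \cdot \Pr(\vc{X} > 0, \vc{Y} > 0 \mid \phi(\vc{X},\vc{Y}) = \theta)$. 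Your factor $2d$ is not merely a bookkeeping slip: the $2^d$ is exponentially large and is precisely what combines with the exponentially small orthant probability to produce the stated rates. Consequently your entire large-deviations setup --- forcing $(X_1, Y_1) \approx (\alpha\sqrt{d}, \gamma\sqrt{d})$ and controlling $\max_{i\ge 2} X_i^2 \le X_1^2$ --- analyzes the wrong event; it is essentially the Andoni--Indyk--Laarhoven--Razenshteyn--Schmidt analysis of cross-polytope LSH, and even executed flawlessly it cannot yield Theorem~\ref{thm:main}.

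The correct analysis is structurally different. One conditions on all $2d$ coordinates being positive, which turns $\vc{X}, \vc{Y}$ into half-normal vectors $\hat{\vc{X}}, \hat{\vc{Y}} \sim \HNo(0,1)^d$; the quantity to estimate is then the density of the angle between two i.i.d.\ half-normal vectors at $\theta$, divided by $(2\sin\theta)^{d+o(d)}$ (the unconditioned angle density times $2^{-2d}/2^{-d}$). After introducing dummy variables for the norms, this becomes a large-deviations problem for the three-dimensional empirical mean $\vc{Z}_d = \frac{1}{d}(\sum_i \hat{X}_i \hat{Y}_i, \sum_i \hat{X}_i^2, \sum_i \hat{Y}_i^2)$, handled via the G\"{a}rtner--Ellis theorem. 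The parameter $\beta$ in~\eqref{eq:beta} arises not as a coordinate-scale ratio but as a substitution variable inside the supremum defining the Fenchel--Legendre transform $\Lambda^*$, and the three regimes (including the degenerate $\beta \to 1$ and $\beta \to \infty$ endpoints, and the sign split $\lambda_1 \gtrless 0$) come out of that optimization. None of this machinery is present, or reachable, from your starting point; you would need to restart from the orthant description of the cells.
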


\begin{figure}[!t]
\begin{center}\includegraphics[width=14cm]{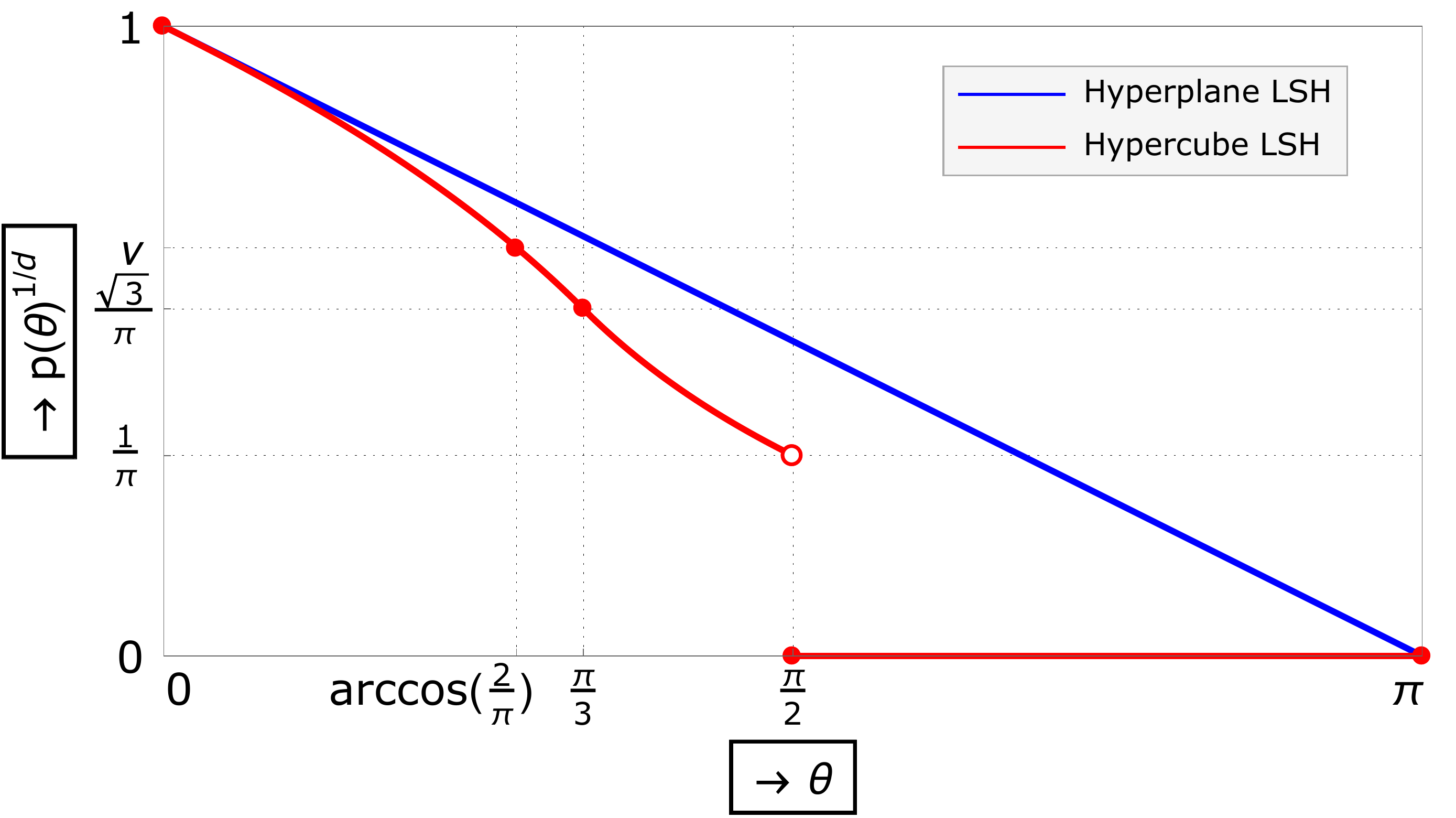}\end{center}
\caption{Asymptotics of collision probabilities for hypercube LSH, compared to hyperplane LSH. Here $\nu = \pi/(2 \sqrt{\pi^2 - 4})$, and the dashed vertical lines correspond to boundary points of the piecewise parts of Theorem~\ref{thm:main}. The blue line indicates hyperplane LSH with $d$ random hyperplanes. \label{fig:main}}
\end{figure}

Denoting the query complexity of LSH methods by $n^{\rho + o(1)}$, the parameter $\rho$ for hypercube LSH is up to $\log_2(\pi) \approx 1.65$ times smaller than for hyperplane LSH. For large $d$, hypercube LSH is dominated by cross-polytope LSH (unless $c \cdot r > \sqrt{2}$), but as the convergence to the limit is rather slow, in practice either method might be better, depending on the exact parameter setting. For the random setting, Figure~\ref{fig:rho} shows limiting values for $\rho$ for hyperplane, hypercube and cross-polytope LSH. We again remark that these are asymptotics for $d \to \infty$, and may not accurately reflect the performance of these methods for moderate $d$. We further briefly discuss how the hashing for hypercube LSH can be made efficient. 

\begin{figure}[!t]
\begin{center}
\includegraphics[width=14cm]{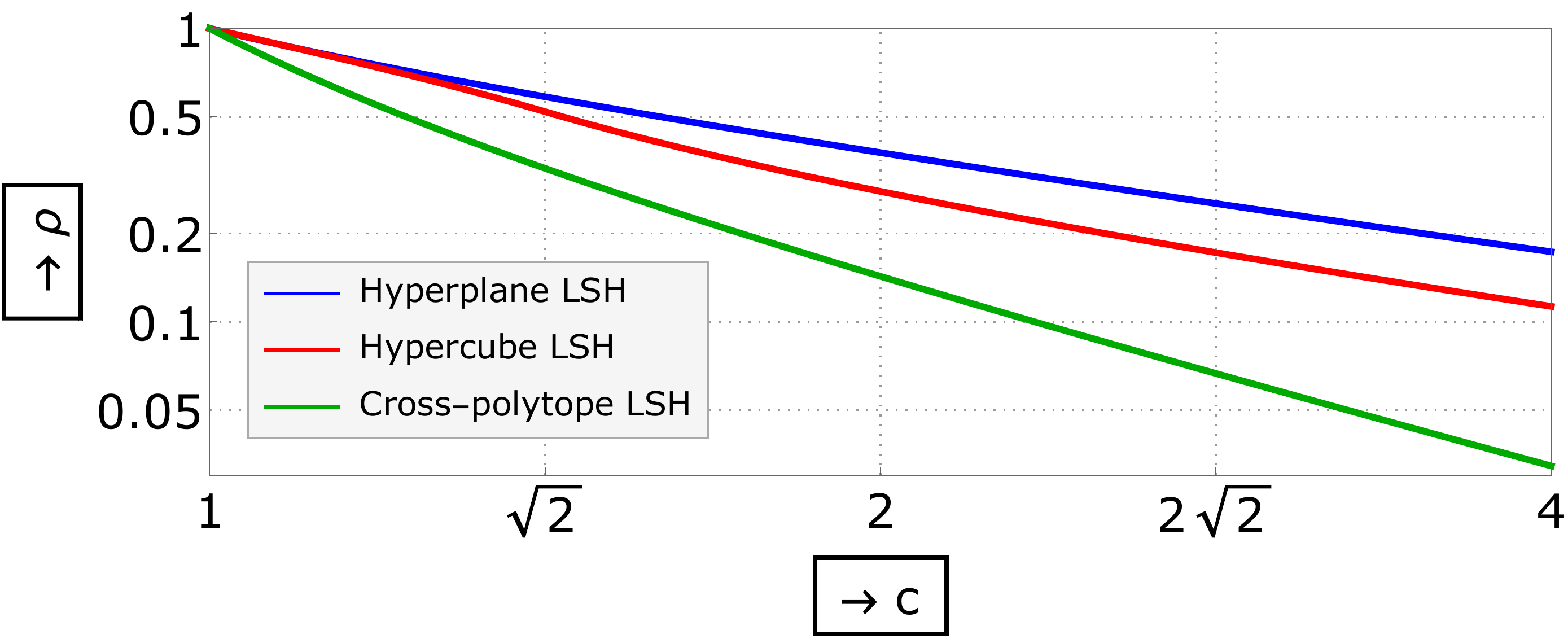}
\end{center}
\caption{Asymptotics for the LSH exponent $\rho$ when using hyperplane LSH, hypercube LSH, and cross-polytope LSH, for $(c, r)$-ANN with $c \cdot r = \sqrt{2}$. The curve for hyperplane LSH is exact for arbitrary $d$, while for the other two curves, order terms vanishing as $d \to \infty$ have been omitted.\label{fig:rho}}
\end{figure}

\paragraph*{Partial hypercube LSH.} As the number of hash regions of a full-dimensional hypercube is often prohibitively large, we also consider \textit{partial} hypercube LSH, where a $d'$-dimensional hypercube is used to partition a data set in dimension $d$. Building upon a result of Jiang~\cite{jiang06}, we characterize when hypercube and hyperplane LSH are asymptotically equivalent in terms of the relation between $d'$ and $d$, and we empirically illustrate the convergence towards either hyperplane or hypercube LSH for larger $d'$. An important open problem remains to identify how large the ratio $d' / d$ must be for the asymptotics of partial hypercube LSH to be equivalent to those of full-dimensional hypercube LSH.

\paragraph*{Application to lattice sieving.} Finally, we consider a specific use case of different LSH methods, in the context of lattice cryptanalysis. We show that the heuristic complexity of lattice sieving with hypercube LSH is expected to be slightly better than when using hyperplane LSH, and we discuss how experiments have previously indicated that in this application, hypercube LSH is superior to other dimensions up to dimensions $d \approx 80$.


\section{Preliminaries}
\label{sec:pre}

\paragraph*{Notation.} We denote probabilities with $\Pr(\cdot)$ and expectations with $\expn(\cdot)$. Capital letters commonly denote random variables, and boldface letters denote vectors. We informally write $\Pr(X = x)$ for continuous $X$ to denote the density of $X$ at $x$. For probability distributions $\mathcal{D}$, we write $X \sim \mathcal{D}$ to denote that $X$ is distributed according to $\mathcal{D}$. For sets $S$, with abuse of notation we further write $X \sim S$ to denote $X$ is drawn uniformly at random from $S$. We write $\No(\mu, \sigma^2)$ for the normal distribution with mean $\mu$ and variance $\sigma^2$, and $\HNo(\mu, \sigma^2)$ for the distribution of $|X|$ when $X \sim \No(\mu, \sigma^2)$. For $\mu = 0$ the latter corresponds to the \textit{half-normal} distribution. We write $\vc{X} \sim \mathcal{D}^d$ to denote a $d$-dimensional vector where each entry is independently distributed according to $\mathcal{D}$. In what follows, $\|\vc{x}\| = \sqrt{\sum_i x_i^2}$ denotes the Euclidean norm, and $\langle \vc{x}, \vc{y} \rangle = \sum_i x_i y_i$ denotes the standard inner product. We denote the angle between two vectors by $\phi(\vc{x}, \vc{y}) = \arccos \langle \vc{x} / \|\vc{x}\|, \vc{y} / \|\vc{y}\| \rangle$. 

\begin{lemma}[Distribution of angles between random vectors {\cite[Lemma 2]{becker16lsf}}] \label{lem:sphericalcap}
Let $\vc{X}, \vc{Y} \sim \No(0,1)^d$ be two independent standard normal vectors. Then $\Pr(\phi(\vc{X}, \vc{Y}) = \theta) = (\sin \theta)^{d + o(d)}$.
\end{lemma}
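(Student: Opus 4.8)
The plan is to reduce the claim to a one–dimensional density computation by exploiting the rotational symmetry of the standard Gaussian, and then to observe that every polynomial–in–$d$ prefactor is swallowed by the $o(d)$ in the exponent. First I would condition on $\vc{X}$: since $\vc{Y} \sim \No(0,1)^d$ is rotationally invariant, $\vc{Y}/\|\vc{Y}\|$ is uniform on the unit sphere $S^{d-1}$ and independent of $\vc{X}$, and $\phi(\vc{X}, \vc{Y})$ depends on $\vc{X}$ only through the direction $\vc{X}/\|\vc{X}\|$. Hence $\phi(\vc{X}, \vc{Y})$ has the same law as $\phi(\vc{e}_1, \vc{Z}) = \arccos Z_1$, where $\vc{Z}$ is uniform on $S^{d-1}$ and $Z_1$ is its first coordinate.

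Second, I would write down the density of $\theta$. It is classical that the marginal density of $Z_1$ is proportional to $(1-t^2)^{(d-3)/2}$ on $[-1,1]$; substituting $t = \cos\theta$, so that $dt = -\sin\theta\,d\theta$ and $1 - t^2 = \sin^2\theta$, gives that $\theta$ has density
\[
f_d(\theta) \;=\; \frac{\Gamma(d/2)}{\sqrt{\pi}\,\Gamma((d-1)/2)}\,(\sin\theta)^{d-2}, \qquad \theta \in [0,\pi],
\]
with the constant fixed by $\int_0^\pi f_d = 1$. The factor $(\sin\theta)^{d-2}$ also has a direct geometric meaning: the locus $\{\vc{z}\in S^{d-1} : \phi(\vc{e}_1,\vc{z}) = \theta\}$ is a $(d-2)$-sphere of radius $\sin\theta$, whose $(d-2)$-dimensional area scales as $(\sin\theta)^{d-2}$, and the extra $\sin\theta$ from the normalization constant versus the Jacobian combine to the displayed exponent.

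Finally I would take asymptotics for fixed $\theta \in (0,\pi)$: the factor $(\sin\theta)^{-2}$ is an $O(1)$ constant, hence $2^{o(d)}$, and by Stirling the normalization constant $\Gamma(d/2)/(\sqrt{\pi}\,\Gamma((d-1)/2)) = \Theta(\sqrt{d}) = 2^{o(d)}$ as well, so $f_d(\theta) = (\sin\theta)^{d}\cdot 2^{o(d)} = (\sin\theta)^{d+o(d)}$, which is the claim (the endpoints $\theta\in\{0,\pi\}$ being vacuous under the convention $0 = 0^{d+o(d)}$, or simply excluded). I do not expect a genuine obstacle here; the only point needing a line of care is verifying that the normalization is subexponential, which is immediate from Stirling, and one can even avoid computing it by sandwiching $f_d(\theta)$ using the Wallis-type integral $\int_0^\pi (\sin\theta)^{d-2}\,d\theta = \Theta(1/\sqrt{d})$.
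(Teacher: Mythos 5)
Your proof is correct, and it is essentially the standard argument behind the cited result (the paper itself gives no proof, importing the lemma from~\cite{becker16lsf}, where the same reduction to the uniform distribution on $S^{d-1}$ and the $(\sin\theta)^{d-2}$ surface-area density is used). The only point worth double-checking — that the normalization constant $\Gamma(d/2)/(\sqrt{\pi}\,\Gamma((d-1)/2)) = \Theta(\sqrt{d})$ is subexponential — you handle correctly via Stirling.
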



\paragraph*{Locality-sensitive hashing.} Locality-sensitive hash functions~\cite{indyk98} are functions $h$ mapping a $d$-dimensional vector $\vc{x}$ to a low-dimensional \textit{sketch} $h(\vc{x})$, such that vectors which are nearby in $\mathbb{R}^d$ are more likely to be mapped to the same sketch than distant vectors. For the angular distance\footnote{Formally speaking, the angular distance is only a similarity measure, and not a metric.} $\phi(\vc{x}, \vc{y})$, we quantify a set of hash functions $\mathcal{H}$ as follows (see~\cite{indyk98}):

\begin{definition} 
A hash family $\mathcal{H}$ is called $(\theta_1, \theta_2, p_1, p_2)$-sensitive if for $\vc{x}, \vc{y} \in \mathbb{R}^d$ we have:
\begin{itemize}
  \item If $\phi(\vc{x}, \vc{y}) \leq \theta_1$ then $\Pr_{h \sim \mathcal{H}}(h(\vc{x}) = h(\vc{y})) \geq p_1$;
  \item If $\phi(\vc{x}, \vc{y}) \geq \theta_2$ then $\Pr_{h \sim \mathcal{H}}(h(\vc{x}) = h(\vc{y})) \leq p_2$.
\end{itemize}
\end{definition}

The existence of locality-sensitive hash families implies the existence of fast algorithms for (approximate) near neighbors, as the following lemma describes\footnote{Various conditions and order terms (which are commonly $n^{o(1)}$) are omitted here for brevity.}. For more details on the general principles of LSH, we refer the reader to e.g.~\cite{indyk98, andoni09}.

\begin{lemma}[Locality-sensitive hashing {\cite{indyk98}}]  
\label{lem:lsh}
Suppose there exists a $(\theta_1, \theta_2, p_1, p_2)$-sensitive family $\mathcal{H}$. Let $\rho = \frac{\log(p_1)}{\log(p_2)}$. Then w.h.p.\ we can either find an element $\vc{p} \in L$ at angle at most $\theta_2$ from $\vc{q}$, or conclude that no elements $\vc{p} \in L$ at angle at most $\theta_1$ from $\vc{q}$ exist, in time $n^{\rho + o(1)}$ with space and preprocessing costs $n^{1 +\rho + o(1)}$.
\end{lemma}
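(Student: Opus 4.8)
The plan is to follow the classical amplification argument of Indyk and Motwani: boost the gap between $p_1$ and $p_2$ by an AND-composition of $k$ independent hash functions, and then recover a constant (later high) success probability by an OR-composition over $T$ independent hash tables, tuning $k$ and $T$ so that both the probability of missing a genuine near neighbor and the expected number of spurious far-point collisions stay within an $n^{o(1)}$ slack. Concretely, for an integer $k$ to be fixed, let $g=(h_1,\dots,h_k)$ with $h_1,\dots,h_k$ drawn independently from $\mathcal{H}$; by independence $\Pr_{g}(g(\vc{x})=g(\vc{y})) = \Pr_{h\sim\mathcal{H}}(h(\vc{x})=h(\vc{y}))^{k}$, so the composite family is $(\theta_1,\theta_2,p_1^{k},p_2^{k})$-sensitive. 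Choosing $k=\lceil \log n/\log(1/p_2)\rceil$ makes the far-collision probability satisfy $p_2^{k}\le 1/n$, while the near-collision probability satisfies $p_1^{k}\ge n^{-\rho-o(1)}$, the $o(1)$ absorbing the ceiling, since $k\log p_1 = -\rho\log n + O(1)$ by the definition $\rho=\log(p_1)/\log(p_2)$ (both logarithms being negative, so $\rho>0$).

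Next I would take $T=\lceil n^{\rho}\rceil$ independent hash tables, table $j$ indexed by an independent copy $g_j$ of the composite function; for the w.h.p.\ statement one replaces $T$ by an extra $\Theta(\log n)$ factor, which is absorbed into $n^{o(1)}$. Preprocessing inserts every $\vc{p}\in L$ (the data set, of size $n$) into bucket $g_j(\vc{p})$ of table $j$; this costs $nT$ composite-hash evaluations, i.e.\ $n^{1+\rho+o(1)}$ evaluations of functions in $\mathcal{H}$, and stores $O(nT)=n^{1+\rho+o(1)}$ pointers together with the $T$ descriptions of the $g_j$. On a query $\vc{q}$, the algorithm retrieves the bucket $g_j(\vc{q})$ for each $j$, scans its elements computing $\phi(\vc{q},\cdot)$, returns the first element found at angle at most $\theta_2$, and — the standard trick to avoid being drowned by far points — aborts once a total of $3T$ elements have been inspected across all tables, reporting that no near neighbor exists.

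The correctness argument then splits into the two usual cases. If some $\vc{p}^{\ast}\in L$ has $\phi(\vc{q},\vc{p}^{\ast})\le\theta_1$, it collides with $\vc{q}$ in a given table with probability at least $p_1^{k}\ge n^{-\rho-o(1)}$, so the probability it fails to collide in all $T=n^{\rho}$ tables is at most $(1-n^{-\rho-o(1)})^{n^{\rho}}$, which is bounded away from $1$ and is driven to $o(1)$ by the extra $\log n$ factor in $T$; conversely, the expected number of points at angle $\ge\theta_2$ colliding with $\vc{q}$ in one table is at most $n\cdot p_2^{k}\le 1$, hence at most $T$ in total, and by Markov's inequality fewer than $3T$ such far points are ever scanned with probability at least $\tfrac{2}{3}$. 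On the joint event of these two, the scan reaches $\vc{p}^{\ast}$ (or some other point at angle $\le\theta_2$) before the abort threshold; and if no point at angle $\le\theta_1$ exists, reporting failure is always safe. Finally I would tally the query cost: $T$ bucket lookups, each needing one composite hash evaluation ($k=O(\log n)$ evaluations of $h\in\mathcal{H}$), plus the $O(T)$ scanned points at $O(d)$ arithmetic per distance computation; multiplying the dominant $T=n^{\rho}$ by the $n^{o(1)}$ per-lookup overhead (composite hashing, bucket scan, distance evaluations, and the success-amplification factor) yields query time $n^{\rho+o(1)}$, and the space and preprocessing bounds follow identically with the extra factor $n$, giving $n^{1+\rho+o(1)}$. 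The main obstacle is not depth but bookkeeping: balancing the two failure modes (missing the true neighbor versus drowning in false positives) and verifying that the rounding in $k$, the abort threshold $3T$, and the probability amplification all remain within the claimed $n^{o(1)}$ order terms.
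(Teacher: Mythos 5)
Your proposal is the standard Indyk--Motwani amplification argument (AND-compose $k$ hash functions with $k \approx \log n / \log(1/p_2)$, OR over $T \approx n^{\rho}$ tables, abort after $O(T)$ far-point collisions, Markov plus repetition for the w.h.p.\ claim), and it is correct; the paper does not prove this lemma at all but cites it from \cite{indyk98}, explicitly omitting the conditions and $n^{o(1)}$ order terms, so your write-up is exactly the proof behind the citation. The only caveat worth keeping in mind (and it is the one the paper's caveat footnote alludes to) is that absorbing the ceiling in $k$ and the repetition factor into $n^{o(1)}$ tacitly requires $p_1$ not to be too small, e.g.\ $p_1 \geq n^{-o(1)}$, which matters in this paper's regime of exponentially small collision probabilities.
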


\paragraph*{Hyperplane LSH.} For the angular distance, Charikar~\cite{charikar02} introduced the hash family $\mathcal{H} = \{h_{\vc{a}}: \vc{a} \sim \mathcal{D}\}$ where $\mathcal{D}$ is any spherically symmetric distribution on $\mathbb{R}^d$, and $h_{\vc{a}}$ satisfies:
\begin{align}
h_{\vc{a}}(\vc{x}) = \begin{cases} +1, & \text{if } \langle \vc{a}, \vc{x}\rangle \geq 0; \\ -1, & \text{if } \langle \vc{a}, \vc{x}\rangle < 0. \end{cases}
\end{align}
The vector $\vc{a}$ can be interpreted as the normal vector of a random hyperplane, and the hash value depends on which side of the hyperplane $\vc{x}$ lies on. For this hash function, the probability of a collision is directly proportional to the angle between $\vc{x}$ and $\vc{y}$:
\begin{align}
\Pr_{h \sim \mathcal{H}}\big(h(\vc{x}) = h(\vc{y})\big) 
= 1 - \frac{\phi(\vc{x}, \vc{y})}{\pi}\, .
\end{align}
For any two angles $\theta_1 < \theta_2$, the above family $\mathcal{H}$ is $(\theta_1, \theta_2, 1 - \frac{\theta_1}{\pi}, 1 - \frac{\theta_2}{\pi})$-sensitive. 

\paragraph{Large deviations theory.} Let $\{\vc{Z}_d\}_{d \in \mathbb{N}} \subset \mathbb{R}^k$ be a sequence of random vectors corresponding to an empirical mean, i.e.\ $\vc{Z}_d = \frac{1}{d} \sum_{i=1}^d \vc{U}_i$ with $\vc{U}_i$ i.i.d. We define the logarithmic moment generating function $\Lambda$ of $\vc{Z}_d$ as:
\begin{align}
\Lambda(\vc{\lambda}) = \ln \expn_{\vc{U}_1} \left[\exp \langle \vc{\lambda}, \vc{U}_1 \rangle\right].
\end{align}
Define $\mathcal{D}_{\Lambda} = \{\vc{\lambda} \in \mathbb{R}^k: \Lambda(\vc{\lambda}) < \infty\}$. The \textit{Fenchel-Legendre} transform of $\Lambda$ is defined as:
\begin{align}
\Lambda^*(\vc{z}) = \sup_{\vc{\lambda} \in \mathbb{R}^k} \left\{\langle \vc{\lambda}, \vc{z} \rangle - \Lambda(\vc{\lambda})\right\}.
\end{align}
The following result describes that under certain conditions on $\{\vc{Z}_d'\}$, the asymptotics of the probability measure on a set $F$ are related to the function $\Lambda^*$. 

\begin{lemma} [G\"{a}rtner-Ellis theorem {\cite[Theorem 2.3.6 and Corollary 6.1.6]{dembo10}}] \label{lem:ge}
Let $\vc{0}$ be contained in the interior of $\mathcal{D}_{\Lambda}$, and let $\vc{Z}_d$ be an empirical mean. Then for arbitrary sets $F$,
\begin{align}
\lim_{d \to \infty} \, \tfrac{1}{d} \, \ln \Pr(\vc{z} \in F) = -\inf_{\vc{z} \in F} \Lambda^*(\vc{z}).
\end{align}
\end{lemma}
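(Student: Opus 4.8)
Since this is the classical Gärtner--Ellis / Cramér theorem (the cited \cite[Thm.~2.3.6, Cor.~6.1.6]{dembo10}), the plan is to reproduce the standard two-sided argument: an exponential \emph{upper} bound on closed sets via a Chernoff/Markov estimate, an exponential \emph{lower} bound on open sets via an exponential change of measure, and then the observation that the clean limiting identity as written holds exactly on those sets $F$ for which $\inf_{\operatorname{int}F}\Lambda^* = \inf_{\overline{F}}\Lambda^*$ --- the (tacit) regularity hypothesis, which one checks directly for each $F$ occurring in the sequel. Throughout, the hypothesis $\vc{0}\in\operatorname{int}\mathcal D_\Lambda$ guarantees that the moment generating function $\expn[\exp\langle\vc\lambda,\vc U_1\rangle]$ is finite in a neighbourhood of the origin, which is what makes every estimate below meaningful; in the i.i.d.\ empirical-mean setting it also supplies the steepness of $\Lambda$ needed at the very end.

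\emph{Upper bound.} For a half-space $H=\{\vc w:\langle\vc\lambda,\vc w\rangle\ge a\}$, Markov's inequality applied to $\exp(d\langle\vc\lambda,\vc Z_d\rangle)$, together with the product formula $\expn[\exp(d\langle\vc\lambda,\vc Z_d\rangle)]=\exp(d\,\Lambda(\vc\lambda))$ (immediate from independence of the $\vc U_i$, since $\vc Z_d=\tfrac1d\sum_i\vc U_i$), gives $\tfrac1d\ln\Pr(\vc Z_d\in H)\le\Lambda(\vc\lambda)-a$; optimizing over $\vc\lambda$ turns this into a bound by $-\Lambda^*$. To upgrade from half-spaces to a compact $K$, cover $K$ by finitely many small balls, bound each ball by a tangent half-space whose exponent is within $\eps$ of $\inf_K\Lambda^*$, and take a union bound over the subexponentially many balls. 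To pass from compact to arbitrary closed $F$, invoke exponential tightness: $\vc 0\in\operatorname{int}\mathcal D_\Lambda$ forces $\Pr(\|\vc Z_d\|>M)\le e^{-d\,c(M)}$ with $c(M)\to\infty$, so the part of $F$ outside a large ball is exponentially negligible. This yields $\limsup_d\tfrac1d\ln\Pr(\vc Z_d\in F)\le-\inf_{\vc z\in F}\Lambda^*(\vc z)$ for closed $F$.

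\emph{Lower bound.} Fix an open $G$ and $\vc z\in G$; it suffices to show $\liminf_d\tfrac1d\ln\Pr(\vc Z_d\in G)\ge-\Lambda^*(\vc z)$. If $\vc z$ is an exposed point, i.e.\ $\nabla\Lambda(\vc\eta)=\vc z$ for some $\vc\eta$ with $\Lambda^*(\vc z)=\langle\vc\eta,\vc z\rangle-\Lambda(\vc\eta)$, introduce the coordinatewise tilted law $d\tilde\mu(\vc u)=\exp(\langle\vc\eta,\vc u\rangle-\Lambda(\vc\eta))\,d\mu(\vc u)$. Under the product tilt the summands have mean $\nabla\Lambda(\vc\eta)=\vc z$, so by the law of large numbers $\tilde\Pr(\vc Z_d\in B(\vc z,\delta))\to1$ for every ball $B(\vc z,\delta)\subseteq G$. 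Writing the original probability back in terms of $\tilde\mu$ and restricting to that ball,
\begin{align}
\Pr(\vc Z_d\in G)\ \ge\ \expn_{\tilde\mu}\!\Big[e^{-d\langle\vc\eta,\vc Z_d\rangle+d\Lambda(\vc\eta)}\,\mathbf 1_{\{\vc Z_d\in B(\vc z,\delta)\}}\Big]\ \ge\ e^{-d(\langle\vc\eta,\vc z\rangle-\Lambda(\vc\eta))-d\|\vc\eta\|\delta}\,\tilde\Pr\big(\vc Z_d\in B(\vc z,\delta)\big),
\end{align}
and letting $d\to\infty$ and then $\delta\to0$ gives $\liminf_d\tfrac1d\ln\Pr(\vc Z_d\in G)\ge-\Lambda^*(\vc z)$. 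For a non-exposed $\vc z$ with $\Lambda^*(\vc z)<\infty$ one approximates: steepness/essential smoothness of $\Lambda$ implies $\Lambda^*(\vc z)$ is the limit of $\Lambda^*$ along exposed points converging to $\vc z$, and lower semicontinuity of $\Lambda^*$ closes the gap.

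Combining the two bounds gives $-\inf_{\overline F}\Lambda^*\ge\limsup_d\ge\liminf_d\ge-\inf_{\operatorname{int}F}\Lambda^*$; whenever the extreme quantities coincide --- e.g.\ when $F$ equals the closure of its interior, which holds for the sublevel sets, level sets, and product sets appearing later --- the limit exists and equals $-\inf_{\vc z\in F}\Lambda^*(\vc z)$, as claimed. I expect the genuinely delicate point to be the lower bound at non-exposed points: the tilting argument is transparent only where $\Lambda^*$ is exposed, and patching in the remaining directions is exactly where the convexity and steepness of $\Lambda$ --- ultimately a consequence of $\vc 0\in\operatorname{int}\mathcal D_\Lambda$ --- are indispensable.
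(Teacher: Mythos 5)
The paper does not prove this lemma at all: it is quoted verbatim (in slightly informal ``arbitrary $F$'' form) from Dembo--Zeitouni, so there is no internal proof to compare against. Your sketch is the standard two-sided argument behind the cited result, and its skeleton is sound: Chernoff bound for half-spaces, covering plus exponential tightness for the closed-set upper bound, exponential tilting plus the law of large numbers for the open-set lower bound, and the correct observation that the clean limit as stated only holds for sets with $\inf_{\operatorname{int} F}\Lambda^* = \inf_{\overline{F}}\Lambda^*$ --- a tacit hypothesis the paper glosses over and that you rightly make explicit.

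The one genuine flaw is your justification of the lower bound at non-exposed points. You assert that $\vc{0} \in \operatorname{int}\mathcal{D}_{\Lambda}$ ``supplies the steepness of $\Lambda$''; it does not. Finiteness of $\Lambda$ near the origin says nothing about the gradient blowing up at the boundary of $\mathcal{D}_{\Lambda}$ (already in one dimension, a density with tails like $e^{-|x|}/(1+x^4)$ has $\mathcal{D}_{\Lambda} \supseteq [-1,1]$ but finite derivative of $\Lambda$ at the endpoints), so the ``approximate by exposed points via essential smoothness'' step is unjustified under the stated hypotheses. This is precisely why the paper cites Corollary 6.1.6 rather than the abstract G\"{a}rtner--Ellis theorem: in the i.i.d.\ empirical-mean setting, Cram\'{e}r's theorem in $\mathbb{R}^k$ obtains the lower bound for open sets from convexity (lower bounds for open convex neighbourhoods via sub-additivity and relative-interior arguments), with no smoothness or steepness assumption on $\Lambda$. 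To repair your write-up you should either switch to that convexity route for the lower bound, or note that for the specific $\Lambda$ computed later in the paper (with domain $\{\lambda_2,\lambda_3 < \tfrac12,\ D<0\}$) essential smoothness can be verified directly, so the exposed-point argument applies there. As it stands, the general claim is proved only modulo an assumption that does not follow from the hypotheses.
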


The latter statement can be read as $\Pr(\vc{z} \in F) = \exp(- d\inf_{\vc{z} \in F} \Lambda^*(\vc{z}) + o(d))$, and thus tells us exactly how $\Pr(\vc{z} \in F)$ scales as $d$ tends to infinity, up to order terms.


\section{Hypercube LSH}
\label{sec:full}

In this section, we will analyze full-dimensional hypercube hashing, with hash family $\mathcal{H} = \{h_A: A \in SO(d)\}$ where $SO(d) \subset \mathbb{R}^{d \times d}$ denotes the rotation group, and $h_A$ satisfies:
\begin{align}
h_A(\vc{x}) = (h_1(A \vc{x}), \dots, h_d(A \vc{x})), \qquad \quad h_i(\vc{x}) = \begin{cases} +1, & \text{if } x_i \geq 0; \\ -1, & \text{if } x_i < 0. \end{cases} 
\end{align}
In other words, a hypercube hash function first applies a uniformly random rotation, and then maps the resulting vector to the orthant it lies in. This equivalently corresponds to a concatenation of $d$ hyperplane hash functions, where all hyperplanes are orthogonal. Collision probabilities for prescribed angles $\theta$ between $\vc{x}$ and $\vc{y}$ are denoted by:
\begin{align}
p(\theta) = \Pr(h_A(\vc{x}) = h_A(\vc{y}) \ | \ \phi(\vc{x}, \vc{y}) = \theta).
\end{align}
Above, the randomness is over $h_A \sim \mathcal{H}$, with $\vc{x}$ and $\vc{y}$ arbitrary vectors at angle $\theta$ (e.g.\ $\vc{x} = \vc{e}_1$ and $\vc{y} = \vc{e}_1 \cos \theta + \vc{e}_2 \sin \theta$). Alternatively, the random rotation $A$ inside $h_A$ may be omitted, and the probability can be computed over $\vc{X}, \vc{Y}$ drawn uniformly at random from a spherically symmetric distribution, \textit{conditioned} on their common angle being $\theta$.

\subsection{Outline of the proof of Theorem~\ref{thm:main}}

Although Theorem~\ref{thm:main} is a key result, due to space restrictions we have decided to defer the full proof (approximately $5.5$~pages) to the appendix. The approach of the proof can be summarized by the following four steps:
\begin{itemize}
\item Rewrite the collision probabilities in terms of (normalized) half-normal vectors $\vc{X}, \vc{Y}$;
\item Introduce dummy variables $x, y$ for the norms of these half-normal vectors, so that the probability can be rewritten in terms of unnormalized half-normal vectors;
\item Apply the G\"{a}rtner-Ellis theorem (Lemma~\ref{lem:ge}) to the three-dimensional vector given by \\ $\vc{Z} = \frac{1}{d}(\sum_i X_i Y_i, \sum_i X_i^2, \sum_i Y_i^2)$ to compute the resulting probabilities for arbitrary $x, y$;
\item Maximize the resulting expressions over $x, y > 0$ to get the final result.
\end{itemize}
The majority of the technical part of the proof lies in computing $\Lambda^*(\vc{z})$, which involves a somewhat tedious optimization of a multivariate function through a case-by-case analysis. 

\paragraph{A note on Gaussian approximations.} From the (above outline of the) proof, and the observation that the final optimization over $x, y$ yields $x = y = 1$ as the optimum, one might wonder whether a simpler analysis might be possible by assuming (half-)normal vectors are already normalized. Such a computation however would only lead to an approximate solution, which is perhaps easiest to see by computing collision probabilities for $\theta = 0$. In the exact computation, where vectors are normalized, $\langle \vc{X}, \vc{Y} \rangle = 1$ implies $\vc{X} = \vc{Y}$. If however we do not take into account the norms of $\vc{X}$ and $\vc{Y}$, and do not condition on the norms being equal to $1$, then $\langle \vc{X}, \vc{Y} \rangle = 1$ could also mean that $\vc{X}, \vc{Y}$ are slightly longer than $1$ and have a small, non-zero angle. In fact, such a computation would indeed yield $p(\theta)^{1/d} \not\to 0$ as $\theta \to 0$.

\subsection{Consequences of Theorem~\ref{thm:main}}

From Theorem~\ref{thm:main}, we can draw several conclusions. Substituting values for $\theta$, we can find asymptotics for $p(\theta)$, such as $p(\frac{\pi}{3})^{1/d} = \frac{\sqrt{3}}{\pi} + o(1)$ and $p(\frac{\pi}{2})^{1/d} = \frac{1}{\pi} + o(1)$. We observe that the limiting function of Theorem~\ref{thm:main} (without the order terms) is continuous everywhere except at $\theta = \frac{\pi}{2}$. To understand the boundary $\theta = \arccos \frac{2}{\pi}$ of the piece-wise limit function, note that two (normalized) half-normal vectors $\vc{X}, \vc{Y}$ have expected inner product $\expn \langle \vc{X}, \vc{Y} \rangle = \frac{2}{\pi}$. 

\paragraph{LSH exponents $\rho$ for random settings.} Using Theorem~\ref{thm:main}, we can explicitly compute LSH exponents $\rho$ for given angles $\theta_1$ and $\theta_2$ for large $d$. As an example, consider the random setting\footnote{Here we assume that $c \cdot r \to (\sqrt{2})^-$, i.e. $c \cdot r$ approaches $\sqrt{2}$ from below. Alternatively, one might interpret this as that if distant points lie at distance $\sqrt{2} \pm o(1)$, then we might expect approximately half of them to lie at distance less than $\sqrt{2}$, with query complexity $O(n/2)^{\rho + o(1)} = n^{\rho + o(1)}$. If however $c \cdot r \geq \sqrt{2}$ then clearly $\rho = 0$, regardless of $d$ and $c$.} with $c = \sqrt{2}$, corresponding to $\theta_2 = \frac{\pi}{2}$ and $\theta_1 = \frac{\pi}{3}$. Substituting the collision probabilities from Theorem~\ref{thm:main}, we get $\rho \to 1 - \tfrac{1}{2} \log_{\pi}(3) \approx 0.520$ as $d \to \infty$. To compare, if we had used random hyperplanes, we would have gotten a limiting value $\rho \to \log_2(\tfrac{3}{2}) \approx 0.585$. For the random case, Figure~\ref{fig:rho} compares limiting values $\rho$ using random and orthogonal hyperplanes, and using the asymptotically superior cross-polytope LSH.

\paragraph{Scaling at $\theta \to 0$ and asymptotics of $\rho$ for large $c$.} For $\theta$ close to $0$, by Theorem~\ref{thm:main} we are in the regime defined by $\beta_0$. For $\cos \theta = 1 - \eps$ with $\eps > 0$ small, observe that $\beta_0 \approx 1$ satisfies $\beta_0 > 1/\cos \theta$. Computing a Taylor expansion around $\eps = 0$, we eventually find $\beta_0 = 1 + \eps + \frac{2 \sqrt{2}}{\pi} \eps^{3/2} + O(\eps^2)$. Substituting this value $\beta_0$ into $p(\theta)$ with $\cos \theta = 1 - \eps$, we find:
\begin{align}
p(\theta) 
= \left(1 - \frac{\sqrt{2}}{\pi} \sqrt{\eps} + O(\eps)\right)^{d + o(d)}.
\end{align}
To compare this with hyperplane LSH, recall that the collision probability for $d$ random hyperplanes is equal to $(1 - \frac{\theta}{\pi})^d$. Since $\cos \theta = 1 - \eps$ translates to $\theta = \sqrt{2 \eps} (1 + O(\eps))$, the collision probabilities for hyperplane hashing in this regime are also $(1 - \frac{\sqrt{2}}{\pi} \sqrt{\eps} + O(\eps))^d$. In other words, for angles $\theta \to 0$, the collision probabilities for hyperplane hashing and hypercube hashing are similar. This can also be observed in Figure~\ref{fig:main}. Based on this result, we further deduce that in random settings with large $c$, for hypercube LSH we have:
\begin{align}
\rho \to \frac{\ln \left(1 - \frac{\sqrt{2}}{\pi c} + O\left(\frac{1}{c^2}\right)\right)}{\ln(1/\pi)} = \frac{\sqrt{2}}{\pi c \ln \pi} + O\left(\frac{1}{c^2}\right) \approx \frac{0.393}{c} + O\left(\frac{1}{c^2}\right).
\end{align}
For hyperplane LSH, the numerator is the same, while the denominator is $\ln(\frac{1}{2})$ instead of $\ln(\frac{1}{\pi})$, leading to values $\rho$ which are a factor $\log_2 \pi + o(1) \approx 1.652 + o(1)$ larger. Both methods are inferior to cross-polytope LSH for large $d$, as there $\rho = O(1/c^2)$ for large $c$~\cite{andoni15cp}.

\subsection{Convergence to the limit}

To get an idea how hypercube LSH compares to other methods when $d$ is not too large, we start by giving explicit collision probabilities for the first non-trivial case, namely $d = 2$.

\begin{proposition}[Square LSH] \label{prop:square}
For $d = 2$, $p(\theta) = 1 - \frac{2 \theta}{\pi}$ for $\theta \leq \frac{\pi}{2}$ and $p(\theta) = 0$ otherwise.
\end{proposition}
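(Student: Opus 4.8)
The plan is to work directly with the two-dimensional geometry rather than specializing Theorem~\ref{thm:main}. For $d=2$, the hash function $h_A$ first applies a uniformly random rotation $A \in SO(2)$ and then records the signs of the two coordinates, so its four possible outputs correspond precisely to the four open quadrants, whose common boundary is the union of the two coordinate lines $\{x_1=0\}$ and $\{x_2=0\}$. On the unit circle this boundary is the set of four equally spaced points at angles $0, \tfrac{\pi}{2}, \pi, \tfrac{3\pi}{2}$, consecutive ones being $\tfrac{\pi}{2}$ apart.

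First I would reduce to a one-parameter picture. By spherical symmetry we may take $\vc{x} = \vc{e}_1$ and $\vc{y} = \vc{e}_1 \cos\theta + \vc{e}_2 \sin\theta$; applying the uniformly random rotation $A$ is then equivalent to placing $\vc{x}$ at a uniformly random angle $\Phi \sim \mathcal{U}[0,2\pi)$ on the circle and $\vc{y}$ at angle $\Phi + \theta$. The two points receive the same hash value iff they lie in the same quadrant, i.e.\ iff the arc of length $\theta$ joining them contains none of the four boundary points, i.e.\ iff neither coordinate line separates $\vc{x}$ from $\vc{y}$.

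Next I would compute this probability by inclusion--exclusion. Let $A_i$ be the event that line $i$ separates $\vc{x}$ and $\vc{y}$. Each line is a single random hyperplane (its normal vector being marginally uniform on the circle), so by Charikar's collision formula $\Pr(A_i) = 1 - (1 - \tfrac{\theta}{\pi}) = \tfrac{\theta}{\pi}$, and $p(\theta) = 1 - \Pr(A_1) - \Pr(A_2) + \Pr(A_1 \cap A_2)$. When $\theta < \tfrac{\pi}{2}$ the arc joining $\vc{x}$ and $\vc{y}$ is shorter than the $\tfrac{\pi}{2}$ spacing between consecutive boundary points, hence contains at most one of them, so $A_1$ and $A_2$ are disjoint and $p(\theta) = 1 - \tfrac{2\theta}{\pi}$. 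When $\theta \geq \tfrac{\pi}{2}$ the arc has length at least $\tfrac{\pi}{2}$, so it contains a boundary point with probability one and $p(\theta) = 0$; the two formulas agree at $\theta = \tfrac{\pi}{2}$.

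There is no genuine obstacle here: the only points requiring a word of care are the reduction in the first step (justified by the rotational invariance built into the definition of $h_A$, equivalently the spherical symmetry of the Gaussian) and the behaviour on the measure-zero event that an endpoint of the arc coincides with a boundary point — in particular at $\theta = \tfrac{\pi}{2}$ — which does not affect any of the probabilities above.
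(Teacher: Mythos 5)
Your proof is correct. The reduction in your first step is exactly the paper's: both arguments place $\vc{x}$ at a uniformly random angle $\psi$ on the circle and $\vc{y}$ at angle $\psi+\theta$. Where you diverge is in the final count. The paper computes the measure of $\psi$ for which \emph{both} points land in the positive quadrant, namely $\psi \in (0,\tfrac{\pi}{2}) \cap (-\theta, \tfrac{\pi}{2}-\theta)$, giving $\tfrac{\pi/2-\theta}{2\pi}$ for $\theta < \tfrac{\pi}{2}$, and multiplies by $4$ using the symmetry between the four quadrants. You instead run inclusion--exclusion over the two separating coordinate lines, using Charikar's single-hyperplane formula $\Pr(A_i)=\tfrac{\theta}{\pi}$ and the observation that an arc of length $\theta<\tfrac{\pi}{2}$ cannot meet two of the four boundary points, so $\Pr(A_1\cap A_2)=0$. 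Both computations are elementary and yield $1-\tfrac{2\theta}{\pi}$; your version has the small expository advantage of exhibiting the answer as the union bound for two hyperplanes being tight (i.e.\ the $d=2$ hypercube probability equals $1$ minus twice the single-hyperplane separation probability, with no correction term), and of making the vanishing at $\theta\geq\tfrac{\pi}{2}$ geometrically immediate. The disjointness of $A_1$ and $A_2$ is special to $d=2$, so this route does not extend to larger $d$, but for the proposition as stated it is complete; your handling of the measure-zero boundary cases is also fine.
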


\begin{proof}
In two dimensions, two randomly rotated vectors $\vc{X}, \vc{Y}$ at angle $\theta$ can be modeled as $\vc{X} = (\cos \psi, \sin \psi)$ and $\vc{Y} = (\cos (\psi + \theta), \sin(\psi + \theta))$ for $\psi \sim [0, 2\pi)$. The conditions $\vc{X}, \vc{Y} > 0$ are then equivalent to $\psi \in (0, \frac{\pi}{2}) \cap (-\theta, \frac{\pi}{2} - \theta)$, which for $\theta < \frac{\pi}{2}$ occurs with probability $\frac{\pi/2 - \theta}{2\pi}$ over the randomness of $\psi$. As a collision can occur in any of the four quadrants, we finally multiply this probability by $4$ to obtain the stated result.
\end{proof}

Figure~\ref{fig:exp-full} depicts $p(\theta)^{1/2}$ in green, along with hyperplane LSH (blue) and the asymptotics for hypercube LSH (red). For larger $d$, computing $p(\theta)$ exactly becomes more complicated, and so instead we performed experiments to empirically obtain estimates for $p(\theta)$ as $d$ increases. These estimates are also shown in Figure~\ref{fig:exp-full}, and are based on $10^5$ trials for each $\theta$ and $d$. Observe that as $\theta \to \frac{\pi}{2}$ and/or $d$ grows larger, $p(\theta)$ decreases and the empirical estimates become less reliable. Points are omitted for cases where no successes occurred. 

\begin{figure}[!t]
\begin{center}
\includegraphics[width=14cm]{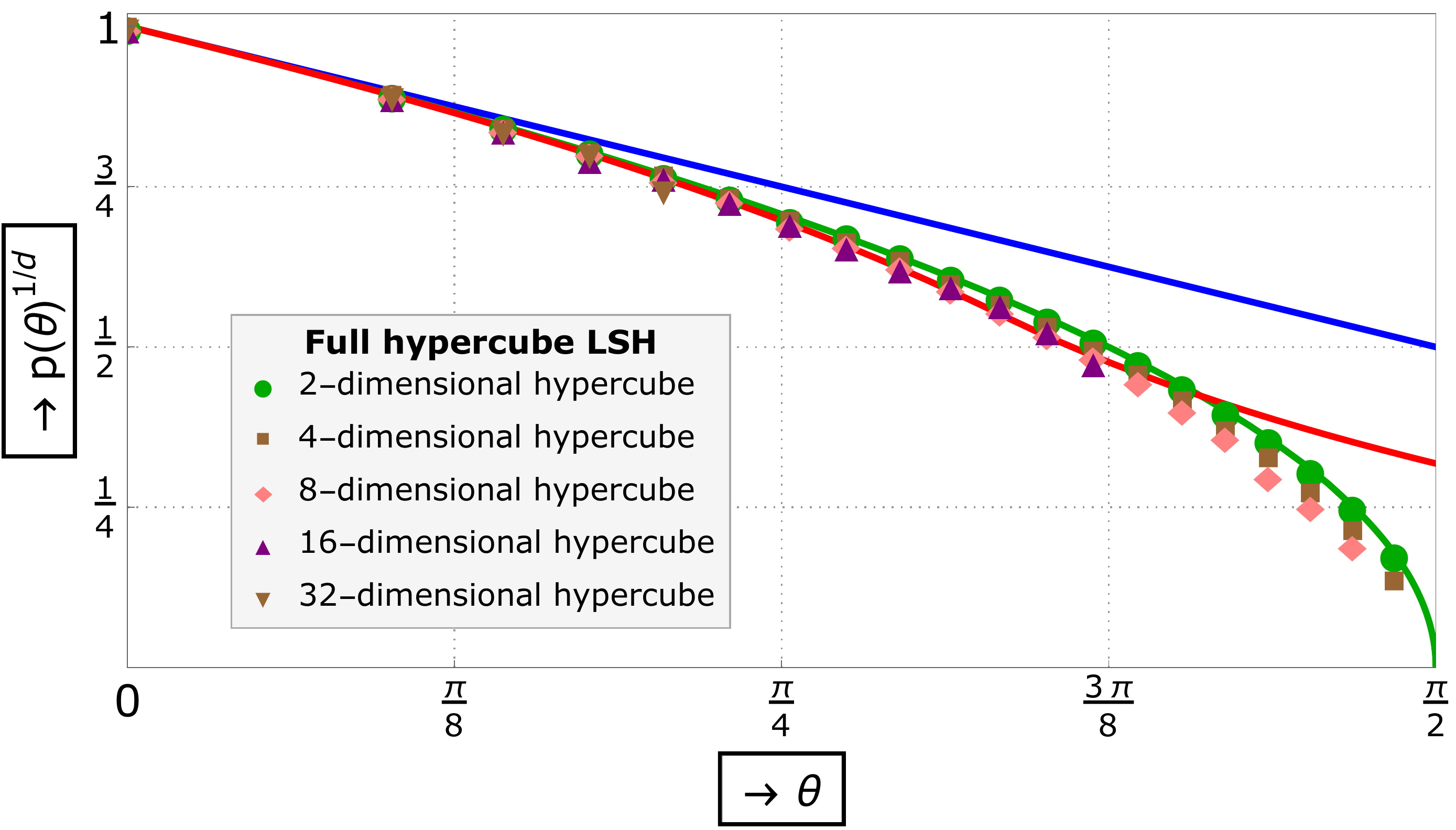}
\end{center}
\caption{Empirical collision probabilities for hypercube LSH for small $d$. The green curve denotes the exact collision probabilities for $d = 2$ from Proposition~\ref{prop:square}. \label{fig:exp-full}}
\end{figure}

Based on these estimates and our intuition, we conjecture that (1) for $\theta \approx 0$, the scaling of $p(\theta)^{1/d}$ is similar for all $d$, and similar to the asymptotic behavior of Theorem~\ref{thm:main}; (2) the normalized collision probabilities for $\theta \approx \frac{\pi}{2}$ approach their limiting value from below; and (3) $p(\theta)$ is likely to be continuous for arbitrary $d$, implying that for $\theta \to \frac{\pi}{2}$, the collision probabilities tend to $0$ for each $d$. These together suggest that values for $\rho$ are actually \textit{smaller} when $d$ is small than when $d$ is large, and the asymptotic estimate from Figure~\ref{fig:rho} might be pessimistic in practice. For the random setting, this would suggest that $\rho \approx 0$ regardless of $c$, as $p(\theta) \to 0$ as $\theta \to \frac{\pi}{2}$ for arbitrary $d$.

\paragraph*{Comparison with hyperplane/cross-polytope LSH.} Finally, \cite[Figures 1 and 2]{terasawa07} previously illustrated that among several LSH methods, the smallest values $\rho$ (for their parameter sets) are obtained with hypercube LSH with $d = 16$, achieving smaller values $\rho$ than e.g.\ cross-polytope LSH with $d = 256$. An explanation for this can be found in:
\begin{itemize}
\item The (conjectured) convergence of $\rho$ to its limit from \textit{below}, for hypercube LSH;
\item The slow convergence of $\rho$ to its limit (from above) for cross-polytope LSH\footnote{\cite[Theorem 1]{andoni15cp} shows that the leading term in the asymptotics for $\rho$ scales as $\Theta(\ln d)$, with a first order term scaling as $O(\ln \ln d)$, i.e.\ a relative order term of the order $O(\ln \ln d / \ln d)$.}.
\end{itemize}
This suggests that the actual values $\rho$ for moderate dimensions $d$ may well be smaller for hypercube LSH (and hyperplane LSH) than for cross-polytope LSH. Based on the limiting cases $d = 2$ and $d \to \infty$, we further conjecture that compared to hyperplane LSH, hypercube LSH achieves smaller values $\rho$ for arbitrary $d$.

\subsection{Fast hashing in practice}

To further assess the practicality of hypercube LSH, recall that hashing is done as follows:
\begin{itemize}
\item Apply a uniformly random rotation $A$ to $\vc{x}$;
\item Look at the signs of $(A \vc{x})_i$. 
\end{itemize}
Theoretically, a uniformly random rotation will be rather expensive to compute, with $A$ being a real, dense matrix. As previously discussed in e.g.~\cite{achlioptas01}, it may suffice to only consider a sparse subset of all rotation matrices with a large enough amount of randomness, and as described in~\cite{andoni15cp, kennedy17} pseudo-random rotations may also be help speed up the computations in practice. As described in~\cite{kennedy17}, this can even be made provable, to obtain a reduced $O(d \log d)$ computational complexity for applying a random rotation.

Finally, to compare this with cross-polytope LSH, note that cross-polytope LSH in dimension $d$ partitions the space in $2d$ regions, as opposed to $2^d$ for hypercube hashing. To obtain a similar fine-grained partition of the space with cross-polytopes, one would have to concatenate $\Theta(d / \log d)$ random cross-polytope hashes, which corresponds to computing $\Theta(d / \log d)$ (pseudo-)random rotations, compared to only one rotation for hypercube LSH. We therefore expect hashing to be up to a factor $\Theta(d / \log d)$ less costly.



\section{Partial hypercube LSH}
\label{sec:partial}

Since a high-dimensional hypercube partitions the space in a large number of regions, for various applications one may only want to use hypercubes in a lower dimension $d' < d$. In those cases, one would first apply a random rotation to the data set, and then compute the hash based on the signs of the first $d'$ coordinates of the rotated data set. This corresponds to the hash family $\mathcal{H} = \{h_{A,d'}: A \in SO(d)\}$, with $h_{A,d'}$ satisfying:
\begin{align}
h_{A,d'}(\vc{x}) = (h_1(A \vc{x}), \dots, h_{d'}(A \vc{x})), \qquad \quad h_i(\vc{x}) = \begin{cases} +1, & \text{if } x_i \geq 0; \\ -1, & \text{if } x_i < 0. \end{cases} 
\end{align}
When ``projecting'' down onto the first $d'$ coordinates, observe that distances and angles are distorted: the angle between the vectors formed by the first $d'$ coordinates of $\vc{x}$ and $\vc{y}$ may not be the same as $\phi(\vc{x},\vc{y})$. The amount of distortion depends on the relation between $d'$ and $d$. Below, we will investigate how the collision probabilities $p_{d',d}(\theta)$ for partial hypercube LSH scale with $d'$ and $d$, where $p_{d',d}(\theta) = \Pr(h(\vc{x}) = h(\vc{y}) \ | \ \phi(\vc{x}, \vc{y}) = \theta)$.

\subsection{Convergence to hyperplane LSH}

First, observe that for $d' = 1$, partial hypercube LSH is \textit{equal} to hyperplane LSH, i.e.\ $p_{1,d}(\theta) = 1 - \frac{\theta}{\pi}$. For $1 < d' \ll d$, we first observe that both (partial) hypercube LSH and hyperplane LSH can be modeled by a projection onto $d'$ dimensions:
\begin{itemize}
\item Hyperplane LSH: $\vc{x} \mapsto A \vc{x}$ with $A \sim \No(0,1)^{d' \times d}$;
\item Hypercube LSH: $\vc{x} \mapsto (A^*) \vc{x}$ with $A \sim \No(0,1)^{d' \times d}$.
\end{itemize}
Here $A^*$ denotes the matrix obtained from $A$ after applying Gram-Schmidt orthogonalization to the rows of $A$. In both cases, hashing is done after the projection by looking at the signs of the projected vector. Therefore, the only difference lies in the projection, and one could ask: for which $d'$, as a function of $d$, are these projections equivalent? When is a set of random hyperplanes already (almost) orthogonal?

This question was answered in~\cite{jiang06}: if $d' = o(d / \log d)$, then $\max_{i,j} |A_{i,j} - A^*_{i,j}| \to 0$ in probability as $d \to \infty$ (implying $A^* = (1 + o(1))A$), while for $d' = \Omega(d / \log d)$ this maximum does not converge to $0$ in probability. In other words, for large $d$ a set of $d'$ random hyperplanes in $d$ dimensions is (approximately) orthogonal iff $d' = o(d / \log d)$. 

\begin{proposition}[Convergence to hyperplane LSH]
Let $p_{d',d}(\theta)$ denote the collision probabilities for partial hypercube LSH, and let $d' = o(d / \log d)$. Then $p_{d',d}(\theta)^{1/d'} \to 1 - \frac{\theta}{\pi}$.
\end{proposition}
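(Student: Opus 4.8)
The plan is to replace partial hypercube hashing by an equivalent sequential construction and then to sandwich the collision probability between $\bigl(1-\tfrac{\theta\pm\varepsilon}{\pi}\bigr)^{d'}$ up to a super-exponentially small error, exploiting that the only way the hypercube and hyperplane hashes can behave differently over $d'$ coordinates is through a rare large-deviation event. As a preliminary reduction I would avoid invoking Jiang's estimate directly (it controls $\max_{i,j}|A_{i,j}-A^*_{i,j}|$, which does not by itself control the signs of $\langle A^*_i,\vc{x}\rangle$), and instead describe $A^*$ sequentially: writing $A_1,\dots,A_{d'}\sim\No(0,1)^d$ for the rows of $A$ and $V_{i-1}=\operatorname{span}(A_1,\dots,A_{i-1})$, Gram--Schmidt gives $A^*_i=\Pi_{V_{i-1}^{\perp}}A_i$ (up to a positive scalar, which is irrelevant for signs), so bit $i$ of the hash of $\vc{x}$ equals $\operatorname{sign}\langle A_i,\Pi_{V_{i-1}^{\perp}}\vc{x}\rangle$, with $A_i$ independent of $\mathcal{F}_{i-1}:=\sigma(A_1,\dots,A_{i-1})$. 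The same two-dimensional rotational-symmetry computation that underlies hyperplane LSH then shows that, conditionally on $\mathcal{F}_{i-1}$, the $i$-th bits of $\vc{x}$ and $\vc{y}$ agree with probability exactly $1-\theta_i/\pi$, where $\theta_i:=\phi\bigl(\Pi_{V_{i-1}^{\perp}}\vc{x},\Pi_{V_{i-1}^{\perp}}\vc{y}\bigr)$ is the angle between the projected vectors and $\theta_1=\theta$.

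Next I would show that all projected angles $\theta_i$ stay within $\varepsilon$ of $\theta$ with overwhelming probability. Since $V_{i-1}$ is a uniformly random $(i-1)$-dimensional subspace, $\|\Pi_{V_{i-1}}\vc{x}\|^2\sim\operatorname{Beta}\bigl(\tfrac{i-1}{2},\tfrac{d-i+1}{2}\bigr)$ has mean $\tfrac{i-1}{d}\le\tfrac{d'}{d}=o(1)$, and likewise for $\vc{y}$; as $|\langle\Pi_{V_{i-1}}\vc{x},\Pi_{V_{i-1}}\vc{y}\rangle|\le\|\Pi_{V_{i-1}}\vc{x}\|\,\|\Pi_{V_{i-1}}\vc{y}\|$, once both squared norms are below some small constant $\varepsilon'$ one gets $|\cos\theta_i-\cos\theta|=O(\varepsilon')$, hence (using that $\arccos$ is Lipschitz near $\cos\theta$ for $\theta\in(0,\pi)$; the cases $\theta\in\{0,\pi\}$ are trivial, with $p_{d',d}\equiv1$ and $p_{d',d}\equiv0$) $|\theta_i-\theta|\le\varepsilon$. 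Standard Beta/$\chi^2$ upper-tail bounds give $\Pr\bigl(\|\Pi_{V_{i-1}}\vc{x}\|^2>\varepsilon'\bigr)\le e^{-\Omega_{\theta}(\varepsilon d)}$ for every $i\le d'$ once $d$ is large, so with $G:=\{\,|\theta_i-\theta|\le\varepsilon\text{ for all }1\le i\le d'\,\}$ we get $\Pr(G^{c})\le 2d'\,e^{-\Omega_{\theta}(\varepsilon d)}$; since $d'=o(d/\log d)$ (indeed $d'=o(d)$ already suffices for the whole argument), this is $o\bigl((1-\tfrac{\theta}{\pi})^{d'}\bigr)$, and in particular $\Pr(G^{c})^{1/d'}\to0$.

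Then I would run a short recursion along the filtration. Setting $a_i:=\Pr\bigl(\text{bits }1,\dots,i\text{ agree and }|\theta_j-\theta|\le\varepsilon\text{ for }j\le i\bigr)$, both the partial-collision event and the ``good-prefix'' event are $\mathcal{F}_{i-1}$-measurable, and conditionally on $\mathcal{F}_{i-1}$ the $i$-th agreement has probability $1-\theta_i/\pi\in[\,1-\tfrac{\theta+\varepsilon}{\pi},\,1-\tfrac{\theta-\varepsilon}{\pi}\,]$ on the good-prefix event; keeping or discarding the $\theta_i$-constraint appropriately (the good-prefix events are nested) gives
\[
\bigl(1-\tfrac{\theta+\varepsilon}{\pi}\bigr)\,a_{i-1}-\delta\ \le\ a_i\ \le\ \bigl(1-\tfrac{\theta-\varepsilon}{\pi}\bigr)\,a_{i-1},\qquad \delta:=\max_j\Pr(|\theta_j-\theta|>\varepsilon).
\]
Unrolling from $a_0=1$ (legitimate because $\theta_1=\theta$), together with $\delta\le e^{-\Omega_{\theta}(\varepsilon d)}$ and $a_{d'}\le p_{d',d}(\theta)\le a_{d'}+\Pr(G^{c})$, yields
\[
\bigl(1-\tfrac{\theta+\varepsilon}{\pi}\bigr)^{d'}-O\bigl(e^{-\Omega_{\theta}(\varepsilon d)}\bigr)\ \le\ p_{d',d}(\theta)\ \le\ \bigl(1-\tfrac{\theta-\varepsilon}{\pi}\bigr)^{d'}+O\bigl(e^{-\Omega_{\theta}(\varepsilon d)}\bigr).
\]
Taking $d'$-th roots, letting $d\to\infty$ (the $e^{-\Omega(\varepsilon d)}$ terms vanish even relative to $(1-\tfrac{\theta\pm\varepsilon}{\pi})^{d'}$, since $d'=o(d)$), and then $\varepsilon\to0$ gives $p_{d',d}(\theta)^{1/d'}\to1-\tfrac{\theta}{\pi}$; the argument is uniform whether $d'$ stays bounded or tends to infinity.

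The step I expect to be the real obstacle is making this comparison lossless. A single hash coordinate disagrees between hyperplane and hypercube hashing with probability $\Theta(\sqrt{d'/d})$, which is \emph{not} small relative to $1$, so a crude union bound over the $d'$ coordinates only succeeds for $d'=o(\log d)$, far short of the claimed range; and the tempting identity $p_{d',d}(\theta)=\expn\prod_{i=1}^{d'}(1-\theta_i/\pi)$ is false, because $\theta_{i+1}$ and the event ``bit $i$ agrees'' both depend on $A_i$. The point of the phased estimate above is that the only thing that can actually hurt is some $\theta_i$ deviating from $\theta$ by a fixed constant, which is a genuine large deviation of probability $e^{-\Omega(\varepsilon d)}$ and hence negligible against $(1-\theta/\pi)^{d'}$ whenever $d'=o(d)$; some routine additional care is needed near $\theta\in\{0,\pi\}$, where $\theta\mapsto\cos\theta$ degenerates.
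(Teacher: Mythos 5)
Your proof is correct, and it takes a genuinely different route from the paper --- indeed, the paper offers no proof at all for this proposition: it is stated immediately after the citation of Jiang's theorem, and the implicit justification is simply that $\max_{i,j}|A_{i,j}-A^*_{i,j}|\to 0$ makes the two projections ``equivalent''. Your objection to that route is well taken: entrywise closeness of $A$ and $A^*$ only yields per-coordinate sign agreement up to probability $1-\Theta(\sqrt{d'/d})$, which is hopeless against a target quantity as small as $(1-\theta/\pi)^{d'}$, so Jiang's estimate by itself does not prove the claimed $1/d'$-root asymptotics. Your sequential argument --- writing bit $i$ as $\operatorname{sign}\langle A_i,\Pi_{V_{i-1}^{\perp}}\vc{x}\rangle$ with $A_i$ independent of $\mathcal{F}_{i-1}$, getting conditional agreement probability exactly $1-\theta_i/\pi$, showing via the $\operatorname{Beta}(\tfrac{i-1}{2},\tfrac{d-i+1}{2})$ tail that every projected angle stays within $\varepsilon$ of $\theta$ except on an event of probability $e^{-\Omega(\varepsilon' d)}$, and then unrolling the tower-property recursion --- is sound at every step (the one-step bounds follow from $\mathcal{F}_{i-1}$-measurability of both the partial-collision and good-prefix events, and the error terms are indeed $o(\gamma^{d'})$ for any fixed $\gamma>0$ whenever $d'=o(d)$). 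What the paper's heuristic buys is brevity and an explanation of why the $d/\log d$ threshold appears; what yours buys is an actual proof, and moreover a stronger statement ($d'=o(d)$ suffices), which is directly relevant to the open problem the paper raises about where the transition from hyperplane to hypercube asymptotics occurs. The only points deserving a careful write-up are the standard Beta/$\chi^2$ upper-tail bound and the degenerate cases $\theta\in\{0,\pi\}$, both of which you already flag.
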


As $d' = \Omega(d / \log d)$ random vectors in $d$ dimensions are asymptotically \textit{not} orthogonal, in that case one might expect either convergence to full-dimensional hypercube LSH, or to something in between hyperplane and hypercube LSH.

\subsection{Convergence to hypercube LSH}

To characterize when partial hypercube LSH is equivalent to full hypercube LSH, we first observe that if $d'$ is large compared to $\ln n$, then convergence to the hypercube LSH asymptotics follows from the Johnson-Lindenstrauss lemma.

\begin{proposition}[Sparse data sets]
Let $d' = \omega(\ln n)$. Then the same asymptotics for the collision probabilities as those of full-dimensional hypercube LSH apply.
\end{proposition}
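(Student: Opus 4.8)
The plan is to reduce partial hypercube LSH in dimension $d'$ to \emph{full}-dimensional hypercube LSH in dimension $d'$, and to use the Johnson--Lindenstrauss lemma to argue that the ``projection'' step distorts the geometry by only a vanishing amount once $d' = \omega(\ln n)$. Concretely, write $\Pi$ for the map $\vc{x} \mapsto ((A\vc{x})_1,\dots,(A\vc{x})_{d'})$ induced by a uniformly random rotation $A \in SO(d)$; this is an orthogonal projection onto a uniformly random $d'$-dimensional subspace, and by definition $h_{A,d'} = h \circ \Pi$ where $h$ is the full-dimensional hypercube hash on $\mathbb{R}^{d'}$. Since $h$ depends only on the signs of the coordinates and on the pairwise angles, both invariant under positive rescaling, one may renormalise $\Pi\vc{x}$ to $S^{d'-1}$ without changing any hash value; hence the collision probabilities of partial hypercube LSH coincide with those of full-dimensional hypercube LSH in dimension $d'$ applied to the renormalised projected data set.

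First I would invoke the Johnson--Lindenstrauss lemma for random orthogonal projections~\cite{achlioptas01}: for $N$ points in $\mathbb{R}^d$ and target dimension $d'$, all $\binom{N}{2}$ pairwise squared distances are preserved up to a factor $1 \pm \eps$ with probability at least $1 - N^2 e^{-\Omega(\eps^2 d')}$. Applying this to the $n$ data points together with the query point (so $N = n+1$) and using $d' = \omega(\ln n)$, one can choose $\eps = \eps(n) \to 0$ with $\eps^2 d' = \omega(\ln n)$, so that with probability $1 - o(1)$ every pairwise distance --- and, since all vectors lie on a sphere, every pairwise angle --- is preserved up to an additive $o(1)$. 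I would then condition on this event, the residual $o(1)$ failure probability being absorbed into the order terms of Lemma~\ref{lem:lsh}.

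On that event, partial hypercube LSH is literally full-dimensional hypercube LSH run in dimension $d'$ on a data set whose pairwise angles match the originals up to $o(1)$. As $d' = \omega(\ln n) \to \infty$, Theorem~\ref{thm:main} applies in dimension $d'$: a pair originally at angle $\theta \in (0,\tfrac{\pi}{2})$ has projected angle $\theta' = \theta + o(1)$, and its collision probability is $f(\theta')^{d' + o(d')}$, where $f$ is the piecewise limiting function of Theorem~\ref{thm:main}. Since $f$ is continuous on $(0,\tfrac{\pi}{2})$ and takes values in $(0,1)$ there, $f(\theta')^{d'+o(d')} = f(\theta)^{d'+o(d')}$, i.e.\ exactly the full-dimensional hypercube LSH asymptotic; for $\theta \geq \tfrac{\pi}{2}$ one has $p_{d',d}(\theta) = 0$ for all $d' \geq 2$ (two vectors in a common orthant have nonnegative inner product), again matching Theorem~\ref{thm:main}. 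The LSH exponent $\rho$ resulting from Lemma~\ref{lem:lsh} therefore equals that of full-dimensional hypercube LSH.

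The main obstacle is the bookkeeping around the Johnson--Lindenstrauss step: one has to confirm that truncating a uniformly random rotation to its first $d'$ coordinates really does satisfy a JL-type concentration guarantee, that it is the \emph{angles} and not the norms that must be controlled (the norms shrink by $\sqrt{d'/d}$, which is harmless after renormalisation), and --- the crux --- that the hypothesis $d' = \omega(\ln n)$ is precisely what lets the distortion $\eps(n)$ tend to $0$ while the failure probability stays $o(1)$. The remaining step, transferring the asymptotics of Theorem~\ref{thm:main} across an $o(1)$ change of angle, is then immediate from continuity of the limiting function away from $\theta = \tfrac{\pi}{2}$, the boundary point itself being a measure-zero (and for $d' \geq 2$ trivially handled) case.
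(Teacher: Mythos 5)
Your proposal is correct and takes essentially the same route as the paper: apply the Johnson--Lindenstrauss lemma to the random projection onto the first $d'$ coordinates so that, for $d' = \omega(\ln n)$, all pairwise angles are preserved up to an additive $o(1)$, and then transfer the asymptotics of Theorem~\ref{thm:main} (now applied in dimension $d'$) across this $o(1)$ perturbation using continuity of the limiting collision-probability function away from $\theta = \frac{\pi}{2}$. Your version merely spells out the bookkeeping (the union bound over pairs, the correct scaling $\eps^2 d' = \omega(\ln n)$, and the renormalisation of the projected vectors) that the paper's two-line argument leaves implicit.
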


\begin{proof}
Let $\theta \in (0, \frac{\pi}{2})$. By the Johnson-Lindenstrauss lemma~\cite{johnson84}, we can construct a projection $\vc{x} \mapsto A \vc{x}$ from $d$ onto $d'$ dimensions, preserving all pairwise distances up to a factor $1 \pm \eps$ for $\eps = \Theta((\ln n) / d') = o(1)$. For fixed $\theta \in (0, \frac{\pi}{2})$, this implies the angle $\phi$ between $A \vc{x}$ and $A \vc{y}$ will be in the interval $\theta \pm o(1)$, and so the collision probability lies in the interval $p(\theta \pm o(1))$. For large $d$, this means that the asymptotics of $p(\theta)$ are the same. 
\end{proof}

To analyze collision probabilities for partial hypercube LSH when neither of the previous two propositions applies, note that through a series of transformations similar to those for full-dimensional hypercube LSH, it is possible to eventually end up with the following probability to compute, where $d_1 = d'$ and $d_2 = d - d'$: 
\begin{align}
&\max_{x, y, u, v, \phi} \Pr\left(\frac{1}{d_1} \sum_{i = 1}^{d_1} X_i Y_i = x y \cos \phi, \quad \frac{1}{d_1} \sum_{i = 1}^{d_1} X_i^2 = x^2, \quad \frac{1}{d_1} \sum_{i = 1}^{d_1} Y_i^2 = y^2, \right. \\
&\qquad \qquad \, \, \, \left.\frac{1}{d_2} \sum_{i = 1}^{d_2} U_i V_i = u v f(\phi, \theta), \quad \frac{1}{d_2} \sum_{i = 1}^{d_2} U_i^2 = u^2, \quad \frac{1}{d_2} \sum_{i = 1}^{d_2} V_i^2 = v^2\right). \label{eq:mess}
\end{align}
Here $f$ is some function of $\phi$ and $\theta$. The approach is comparable to how we ended up with a similar probability to compute in the proof of Theorem~\ref{thm:main}, except that we split the summation indices $I = [d]$ into two sets $I_1 = \{1, \dots, d'\}$ of size $d_1$ and $I_2 = \{d' + 1, \dots, d\}$ of size $d_2$. We then substitute $U_i = X_{d' + i}$ and $V_i = Y_{d' + i}$, and add dummy variables $x, y, u, v$ for the norms of the four partial vectors, and a dummy angle $\phi$ for the angle between the $d_1$-dimensional vectors, given the angle $\theta$ between the $d$-dimensional vectors.

Although the vector $\vc{Z}$ formed by the six random variables in~\eqref{eq:mess} is not an empirical mean over a fixed number $d$ of random vectors (the first three are over $d_1$ terms, the last three over $d_2$ terms), one may expect a similar large deviations result such as Lemma~\ref{lem:ge} to apply here. In that case, the function $\Lambda^*(\vc{z}) = \Lambda^*(z_1, \dots, z_6)$ would be a function of six variables, which we would like to evaluate at $(x y \cos \phi, x^2, y^2, u v f(\phi, \theta), u^2, v^2)$. The function $\Lambda^*$ itself involves an optimization (finding a supremum) over another six variables $\vc{\lambda} = (\lambda_1, \dots, \lambda_6)$, so to compute collision probabilities for given $d, d', \theta$ exactly, using large deviations theory, one would have to compute an expression of the following form:
\begin{align}
\min_{x, y, u, v, \phi} \left\{ \sup_{\lambda_1, \lambda_2, \lambda_3, \lambda_4, \lambda_5, \lambda_6} \, F_{d, d', \theta}(x, y, u, v, \phi, \lambda_1, \lambda_2, \lambda_3, \lambda_4, \lambda_5, \lambda_6)\right\}.
\end{align}
As this is a very complex task, and the optimization will depend heavily on the parameters $d, d', \theta$ defined by the problem setting, we leave this optimization as an open problem. We only mention that intuitively, from the limiting cases of small and large $d'$ we expect that depending on how $d'$ scales with $d$ (or $n$), we obtain a curve somewhere in between the two curves depicted in Figure~\ref{fig:main}.

\subsection{Empirical collision probabilities}

To get an idea of how $p_{d',d}(\theta)$ scales with $d'$ in practice, we empirically computed several values for fixed $d = 50$. For fixed $\theta$ we then applied a least-squares fit of the form $e^{c_1 d + c_2}$ to the resulting data, and plotted $e^{c_1}$ in Figure~\ref{fig:exp-partial}. These data points are again based on at least $10^5$ experiments for each $d'$ and $\theta$. We expect that as $d'$ increases, the collision probabilities slowly move from hyperplane hashing towards hypercube hashing, this can also be seen in the graph -- for $d' = 2$, the least-squares fit is almost equal to the curve for hyperplane LSH, while as $d'$ increases the curve slowly moves down towards the asymptotics for full hypercube LSH. Again, we stress that as $d'$ becomes larger, the empirical estimates become less reliable, and so we did not consider even larger values for $d'$.

\begin{figure}[!t]
\begin{center}
\includegraphics[width=14cm]{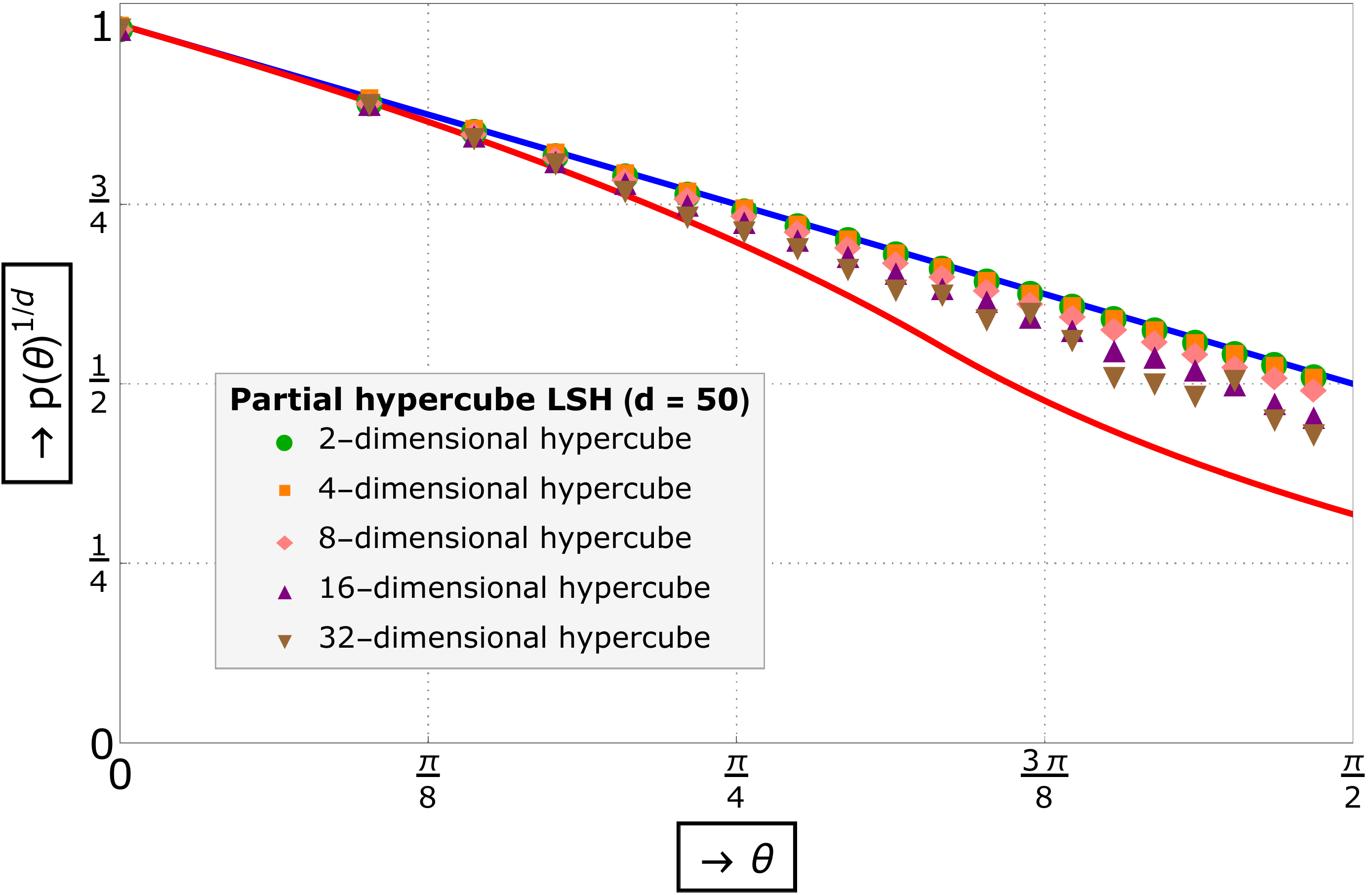}
\end{center}
\caption{Experimental values of $p_{d',50}(\theta)^{1/d'}$, for different values $d'$, compared with the asymptotics for hypercube LSH (red) and hyperplane LSH (blue).\label{fig:exp-partial}}
\end{figure}

Compared to full hypercube LSH and Figure~\ref{fig:exp-full}, we observe that we now approach the limit from above (although the fitted collision probabilities never seem to be smaller than those of hyperplane LSH), and therefore the values $\rho$ for partial hypercube LSH are likely to lie in between those of hyperplane and (the asymptotics of) hypercube LSH.


\section{Application: Lattice sieving for the shortest vector problem} 
\label{sec:lattices}

We finally consider an explicit application for hypercube LSH, namely lattice sieving algorithms for the shortest vector problem. Given a basis $\vc{B} = \{\vc{b}_1, \dots, \vc{b}_d\} \subset \mathbb{R}^d$ of a lattice $\mathcal{L}(\vc{B}) = \{\sum_i \lambda_i \vc{b}_i: \lambda_i \in \mathbb{Z}\}$, the shortest vector problem (SVP) asks to find a shortest non-zero vector in this lattice. Various different methods for solving SVP in high dimensions are known, and currently the algorithm with the best heuristic time complexity in high dimensions is based on lattice sieving, combined with nearest neighbor searching~\cite{becker16lsf}. 

In short, lattice sieving works by generating a long list $L$ of pairwise reduced lattice vectors, where $\vc{x}, \vc{y}$ are reduced iff $\|\vc{x} - \vc{y}\| \geq \min\{\|\vc{x}\|, \|\vc{y}\|\}$. The previous condition is equivalent to $\phi(\vc{x}, \vc{y}) \leq \frac{\pi}{3}$, and so the length of $L$ can be bounded by the kissing constant in dimension $d$, which is conjectured to scale as $(4/3)^{d/2 + o(d)}$. Therefore, if we have a list of size $n = (4/3)^{d/2 + o(d)}$, any newly sampled lattice vector can be reduced against the list many times to obtain a very short lattice vector. The time complexity of this method is dominated by doing $\mathrm{poly}(d) \cdot n$ reductions (searches for nearby vectors) with a list of size $n$. A linear search trivially leads to a heuristic complexity of $n^{2 + o(1)} = (4/3)^{d + o(d)}$ (with space $n^{1 + o(1)}$), while nearest neighbor techniques can reduce the time complexity to $n^{1 + \rho + o(1)}$ for $\rho < 1$ (increasing the space to $n^{1 + \rho + o(1)}$). For more details, see e.g.~\cite{nguyen08, laarhoven15crypto, becker16lsf}.

Based on the collision probabilities for hypercube LSH, and assuming the asymptotics for partial hypercube LSH (with $d' = O(d)$) are similar to those of full-dimensional hypercube LSH, we obtain the following result. An outline of the proof is given in the appendix.

\begin{proposition}[Complexity of lattice sieving with hypercube LSH] \label{prop:lattice}
Suppose the asymptotics for full hypercube LSH also hold for partial hypercube LSH with $d' \approx 0.1335 d$. Then lattice sieving with hypercube LSH heuristically solves SVP in time and space $2^{0.3222 d + o(d)}$.
\end{proposition}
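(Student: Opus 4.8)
The plan is to plug the collision probabilities of Theorem~\ref{thm:main} into the standard heuristic analysis of LSH-based lattice sieving and then optimize the hypercube dimension $d'$. First I would recall the framework (cf.\ \cite{nguyen08, laarhoven15crypto, becker16lsf}): under the usual heuristic that the list vectors behave as i.i.d.\ uniform vectors on the sphere, the sieve maintains a list $L$ of $n = (4/3)^{d/2 + o(d)}$ unit vectors and performs $\mathrm{poly}(d) \cdot n$ near-neighbor queries, each looking for some $\vc{w} \in L$ with $\phi(\vc{v}, \vc{w}) \le \tfrac{\pi}{3}$; the time and space are $\mathrm{poly}(d)$ times the total cost of these queries and of maintaining the hash structure. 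I would hash $L$ into $t$ partial-hypercube tables of dimension $d' = \gamma d$ (the data structure behind Lemma~\ref{lem:lsh}), and invoke the hypothesis of the proposition, $p_{d',d}(\theta) = p(\theta)^{\gamma d + o(d)}$ with $p(\theta)$ as in Theorem~\ref{thm:main}; on the relevant range $\theta \in [\tfrac{\pi}{3}, \tfrac{\pi}{2}]$ this reads $p_{d',d}(\theta) = \bigl(\tfrac{1+\cos\theta}{\pi\sin\theta}\bigr)^{\gamma d + o(d)}$, with $p_{d',d}(\theta) = 0$ for $\theta > \tfrac{\pi}{2}$, so in particular $p_{d',d}(\tfrac{\pi}{3}) = (\tfrac{\sqrt 3}{\pi})^{\gamma d + o(d)}$ and $p_{d',d}(\tfrac{\pi}{2}) = (\tfrac{1}{\pi})^{\gamma d + o(d)}$.

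Next I would pin down the two cost drivers. By Lemma~\ref{lem:sphericalcap}, all but a $2^{o(d)}$ fraction of the reducing vectors for a given query lie at angle $\approx \tfrac{\pi}{3}$, so it suffices (up to $n^{o(1)}$ factors) to take $t = 1 / p_{d',d}(\tfrac{\pi}{3}) = (\tfrac{\pi}{\sqrt 3})^{\gamma d + o(d)}$ tables in order to find a reducing vector whp; and, again by Lemma~\ref{lem:sphericalcap}, the expected number of non-reducing list vectors colliding with a query in one table is $\max_{\theta \in [\pi/3,\, \pi/2]} n \,(\sin\theta)^{d}\, p_{d',d}(\theta)$, whose base-$2$ exponent is $E(\theta,\gamma) = \tfrac12 \log_2\tfrac43 + (1-\gamma)\log_2\sin\theta + \gamma\log_2(1+\cos\theta) - \gamma\log_2\pi$. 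The running time and the space are then both $\mathrm{poly}(d) \cdot n \cdot t \cdot \max\{1, 2^{E^\star(\gamma) d}\}$ with $E^\star(\gamma) = \max_\theta E(\theta,\gamma)$. I would then optimize over $\gamma$: raising $\gamma$ sparsifies the buckets but multiplies $t$, and comparing exponents shows the optimum sits exactly at the threshold where $E^\star(\gamma) = 0$, i.e.\ where $E(\cdot,\gamma)$ has a zero $\theta^\star$ that is simultaneously its maximizer. Imposing $\partial_\theta E(\theta^\star,\gamma) = 0$ simplifies to the pleasant identity $\cos\theta^\star = \gamma$, and imposing in addition $E(\theta^\star,\gamma) = 0$ then yields a single transcendental equation in $\gamma$ (equivalently $H\bigl(\tfrac{1+\gamma}{2}\bigr) = \tfrac12 \log_2 \tfrac{16}{3} - \gamma \log_2 \pi$, with $H$ the binary entropy), whose root is $\gamma = 0.1335\ldots$; hence $d' \approx 0.1335\, d$, $\theta^\star = \arccos(0.1335\ldots)$, and $n \cdot t = (\tfrac43)^{d/2}\,(\tfrac{\pi}{\sqrt3})^{0.1335\, d} = 2^{0.3222\, d + o(d)}$, which upper bounds both time and space.

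The routine parts are the sieve bookkeeping (query count, hash-table updates, absorbing $\mathrm{poly}(d)$ and $n^{o(1)}$ factors) and the calculus of the $(\gamma, \theta)$-optimization. The genuinely delicate step I expect to be the main obstacle is getting the worst-case colliding angle right: the maximum of $n (\sin\theta)^d p_{d',d}(\theta)$ is attained at an \emph{interior} angle $\theta^\star \in (\tfrac{\pi}{3}, \tfrac{\pi}{2})$ that itself depends on $\gamma$ --- in particular not at $\tfrac{\pi}{2}$, since the hypercube collision probability decays steeply and vanishes beyond $\tfrac{\pi}{2}$ --- so one must carry this $\theta$-dependence through the optimization over $\gamma$ (using, e.g., the envelope theorem when differentiating $E^\star(\gamma)$) rather than substituting a fixed angle, and must also check that decreasing $\gamma$ below the threshold strictly increases the total cost, so that the threshold, and not a sparser or a denser regime, is optimal. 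Finally, it is worth flagging explicitly that the resulting complexity is conditional on the assumed partial-hypercube asymptotics and on the standard sieving heuristic, and is therefore a heuristic bound rather than an unconditional theorem.
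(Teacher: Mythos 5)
Your proposal is correct and follows essentially the same route as the paper: both reduce the heuristic sieve cost to the single balancing condition $\max_{\theta}\{\tfrac12\log_2\tfrac43+\log_2\sin\theta+\gamma\log_2\tfrac{1+\cos\theta}{\pi\sin\theta}\}=0$ with $t=(\pi/\sqrt3)^{\gamma d+o(d)}$ tables of a single $d'=\gamma d$-dimensional hypercube hash each (the paper phrases this via the constraint $k\ge1$ and the condition on $c_t$ imported from~\cite{laarhoven15crypto}, solved numerically at $c_t\approx0.11464$, $\theta_2\approx0.45739\pi$). Your closed-form identities $\cos\theta^\star=\gamma$ and $H\bigl(\tfrac{1+\gamma}{2}\bigr)=\tfrac12\log_2\tfrac{16}{3}-\gamma\log_2\pi$ reproduce exactly those numbers (and your sign of the $\gamma\log_2\tfrac{1+\cos\theta}{\pi\sin\theta}$ term is the one consistent with them), so the only real difference is that you re-derive the sieving/LSH bookkeeping that the paper handles by reference.
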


As expected, the conjectured asymptotic performance of (sieving with) hypercube LSH lies in between those of hyperplane LSH and cross-polytope LSH. 
\begin{itemize}
\item Linear search~\cite{nguyen08}: \hspace{9mm} $2^{0.4150 d + o(d)}$.
\item Hyperplane LSH~\cite{laarhoven15crypto}: \hspace{3.8mm} $2^{0.3366 d + o(d)}$.
\item \textbf{Hypercube LSH}: \hspace{8.3mm} $2^{0.3222 d + o(d)}$.
\item Spherical cap LSH~\cite{laarhoven15latincrypt}: \hspace{1.0mm} $2^{0.2972 d + o(d)}$.
\item Cross-polytope LSH~\cite{becker16cp}: \hspace{0.4mm} $2^{0.2972 d + o(d)}$.
\item Spherical LSF~\cite{becker16lsf}: \hspace{9.7mm} $2^{0.2925 d + o(d)}$.
\end{itemize}
In practice however, the picture is almost entirely reversed~\cite{svp}. The lattice sieving method used to solve SVP in the highest dimension to date ($d = 116$) used a very optimized linear search~\cite{kleinjung14}. The furthest that any nearest neighbor-based sieve has been able to go to date is $d = 107$, using hypercube LSH~\cite{mariano15, mariano16pdp}\footnote{Although phrased as hyperplane LSH, the implementations from~\cite{laarhoven15crypto, mariano15, mariano16pdp} are using hypercube LSH.}. Experiments further indicated that spherical LSF only becomes competitive with hypercube LSH as $d \gtrsim 80$~\cite{becker16lsf, mariano17}, while sieving with cross-polytope LSH turned out to be rather slow compared to other methods~\cite{becker16cp, mariano16private}. Although it remains unclear which nearest neighbor method is the ``most practical'' in the application of lattice sieving, hypercube LSH is one of the main contenders.


\paragraph*{Acknowledgments.}

The author is indebted to Ofer Zeitouni for his suggestion to use results from large deviations theory, and for his many helpful comments regarding this application. The author further thanks Brendan McKay and Carlo Beenakker for their comments. The author is supported by the SNSF ERC Transfer Grant CRETP2-166734 FELICITY.


\newcommand{\etalchar}[1]{$^{#1}$}


\appendix


\section{Proof of Theorem~\ref{thm:main}}

Theorem~\ref{thm:main} will be proved through a series of lemmas, each making partial progress towards a final solution. Reading only the claims made in the lemmas may give the reader an idea how the proof is built up. Before starting the proof, we begin with a useful lemma regarding integrals of (exponentials of) quadratic forms.

\begin{lemma}[Integrating an exponential of a quadratic form in the positive quadrant] \label{lem:exp}
Let $a,b,c \in \mathbb{R}$ with $a, c < 0$ and $D = b^2 - 4 a c < 0$. Then:
\begin{align}
\int_0^{\infty} \int_0^{\infty} \exp(a x^2 + b x y + c y^2) \, dx \, dy = \frac{\pi + 2 \arctan\left(\frac{b}{\sqrt{-D}}\right)}{2 \sqrt{-D}} \, .
\end{align}
\end{lemma}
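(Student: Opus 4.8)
The plan is to pass to polar coordinates, which collapses the double integral to a one-dimensional angular integral over $[0,\tfrac\pi2]$, and then to evaluate that integral by the substitution $t=\tan\phi$, reducing it to the integral of the reciprocal of a quadratic with negative discriminant, for which a standard $\arctan$-antiderivative is available.

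First I would note that the hypotheses $a,c<0$ and $-D=4ac-b^2>0$ say precisely that the quadratic form $Q(x,y)=ax^2+bxy+cy^2$ is negative definite. Hence, writing $x=r\cos\phi$, $y=r\sin\phi$ with $r\ge 0$ and $\phi\in[0,\tfrac\pi2]$, we have $Q(r\cos\phi,r\sin\phi)=-r^2 g(\phi)$ with
\[
g(\phi) = -\bigl(a\cos^2\phi + b\sin\phi\cos\phi + c\sin^2\phi\bigr) > 0 \quad\text{on }[0,\tfrac\pi2],
\]
so the integral converges absolutely and, after the radial integration $\int_0^\infty e^{-r^2 g(\phi)}\,r\,dr=\tfrac1{2g(\phi)}$, equals $\int_0^{\pi/2}\tfrac{d\phi}{2g(\phi)}$.

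Next I would substitute $t=\tan\phi$, so that $d\phi=\tfrac{dt}{1+t^2}$ and $g(\phi)=\tfrac{-(a+bt+ct^2)}{1+t^2}$; the angular integral becomes
\[
\frac12\int_0^{\infty}\frac{dt}{-c\,t^2 - b\,t - a}.
\]
The denominator is a quadratic with leading coefficient $-c>0$ and discriminant $b^2-4(-c)(-a)=D<0$, hence everywhere positive, and the standard formula $\int\frac{dt}{At^2+Bt+C}=\frac{2}{\sqrt{4AC-B^2}}\arctan\!\bigl(\frac{2At+B}{\sqrt{4AC-B^2}}\bigr)$ applies with $A=-c$, $B=-b$, $C=-a$, and $4AC-B^2=-D$. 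Evaluating between $t=0$ and $t=\infty$ — the upper limit contributing $\tfrac\pi2$ since $A>0$, the lower limit contributing $-\arctan(b/\sqrt{-D})$ by oddness of $\arctan$ — gives $\frac{\pi+2\arctan(b/\sqrt{-D})}{\sqrt{-D}}$, and halving yields the claimed value.

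I do not expect a genuine obstacle: the only care needed is sign bookkeeping — deducing $g>0$ from negative definiteness, tracking that $-c>0$ sends the $\arctan$ to $+\tfrac\pi2$ at infinity, and using $\arctan(-s)=-\arctan(s)$ at the lower limit. An alternative would be to diagonalize $Q$ by an orthogonal change of variables, but then one must compute the half-angle of the wedge that is the image of the positive quadrant, which reintroduces exactly the same $\arctan$; the $t=\tan\phi$ route is cleaner.
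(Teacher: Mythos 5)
Your proposal is correct and is essentially the paper's own argument in polar dress: the paper substitutes $y = xs$ and integrates out $x$ first, which exploits the homogeneity of the quadratic form exactly as your radial integration does, and both routes land on the identical one-dimensional integral $\tfrac12\int_0^\infty \frac{dt}{-(a+bt+ct^2)}$ finished by the same $\arctan$ antiderivative. The sign bookkeeping you flag (negative definiteness giving $g>0$, the $+\tfrac{\pi}{2}$ limit at infinity, oddness of $\arctan$ at $t=0$) is precisely what the paper also relies on.
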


\begin{proof}
The proof below is based on substituting $y = x s$ (and $dy = x \, ds$) before computing the integral over $x$. An integral over $1/(a + bs + cs^2)$ then remains, which leads to the arctangent solution in case $b^2 < 4 a c$.
\begin{align}
I &= \int_{y=0}^{\infty}\int_{0}^{\infty} \exp(a x^2 + b x y + c y^2) \, dx \, dy \\
&= \int_{s=0}^{\infty} \left(\int_{0}^{\infty} x \, \exp\left((a + bs + cs^2)x^2 \right) \, dx\right) \, ds \\
&= \int_{0}^{\infty} \left[\frac{\exp\left((a + bs + cs^2) x^2\right)}{2 (a + bs + cs^2)}\right]_{x=0}^{\infty} \, ds \\
&= \int_{0}^{\infty} \left[0 - \frac{1}{2 (a + bs + cs^2)}\right] \, ds \\
&= \frac{-1}{2} \int_{0}^{\infty} \frac{1}{a + bs + cs^2} \, ds.
\end{align}
The last equality used the assumptions $a, c < 0$ and $b^2 < 4 a c$ so that $a + bs + cs^2 < 0$ for all $s > 0$. We then solve the last remaining integral (see e.g.~\cite[Equation (3.3.16)]{abramowitz72}) to obtain:
\begin{align}
I &= \frac{-1}{2} \left[\frac{2}{\sqrt{4 a c - b^2}} \, \arctan \left(\frac{b + 2 c s}{\sqrt{4 a c - b^2}}\right)\right]_{s=0}^{\infty} \\
&= \frac{-1}{2 \sqrt{4 a c - b^2}} \left(-\pi - 2 \arctan \left(\frac{b}{\sqrt{4 a c - b^2}}\right)\right).
\end{align}
Eliminating minus signs and substituting $D = b^2 - 4 a c$, we obtain the stated result.
\end{proof}

Next, we begin by restating the collision probability between two vectors in terms of half-normal vectors.

\begin{lemma}[Towards three-dimensional large deviations] \label{lem:step1}
Let $\mathcal{H}$ denote the hypercube hash family in $d$ dimensions, and as before, let $p$ be defined as:
\begin{align}
p(\theta) &= \Pr_{h \sim \mathcal{H}}(h(\vc{x}) = h(\vc{y}) \ | \ \phi(\vc{x}, \vc{y}) = \theta). 
\end{align}
Let $\hat{\vc{X}}, \hat{\vc{Y}} \sim \HNo(0,1)^d$ and let the sequence $\{\vc{Z}_d\}_{d \in \mathbb{N}} \subset \mathbb{R}^3$ be defined as:
\begin{align}
\vc{Z}_d = \frac{1}{d}\left(\sum_{i=1}^d \hat{X}_i \hat{Y}_i, \, \sum_{i=1}^d \hat{X}_i^2, \, \sum_{i=1}^d \hat{Y}_i^2\right).
\end{align}
Then:
\begin{align}
p(\theta) &= \left(\frac{1}{2 \sin \theta}\right)^{d + o(d)} \max_{x,y > 0} \Pr(\vc{Z}_d = (x y \cos \theta, \, x^2, \, y^2)).
\end{align}
\end{lemma}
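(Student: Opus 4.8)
The plan is to strip off the random rotation, reduce a collision to a coordinatewise sign-agreement event, invert with Bayes' rule, and recognize the resulting quantities as the angle density of a half-normal vector. First I would observe that by rotational invariance, for any fixed $\vc{x}, \vc{y}$ at angle $\theta$ the probability $\Pr_{A}(h_A(\vc{x}) = h_A(\vc{y}))$ equals the probability that two independent vectors $\vc{X}, \vc{Y} \sim \No(0,1)^d$, \emph{conditioned} on $\phi(\vc{X}, \vc{Y}) = \theta$, lie in the same orthant, i.e.\ $\operatorname{sign}(X_i) = \operatorname{sign}(Y_i)$ for all $i$. Writing $E$ for this sign-agreement event, $p(\theta) = \Pr(E \mid \phi(\vc{X}, \vc{Y}) = \theta)$; and since the pairs $(X_i, Y_i)$ are i.i.d.\ with $\Pr(\operatorname{sign}(X_i) = \operatorname{sign}(Y_i)) = \tfrac12$, we get $\Pr(E) = 2^{-d}$.

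Next I would decompose $\vc{X} = \vc{S} \odot |\vc{X}|$ and $\vc{Y} = \vc{S}' \odot |\vc{Y}|$ into sign and magnitude: these four pieces are mutually independent, with $\vc{S}, \vc{S}'$ uniform on $\{\pm 1\}^d$ and $|\vc{X}|, |\vc{Y}| \sim \HNo(0,1)^d$. Conditioning on $E$ (that is, $\vc{S} = \vc{S}'$) leaves $\hat{\vc{X}} := |\vc{X}|$ and $\hat{\vc{Y}} := |\vc{Y}|$ independent $\HNo(0,1)^d$ vectors; moreover $\langle \vc{X}, \vc{Y} \rangle = \langle \hat{\vc{X}}, \hat{\vc{Y}} \rangle$, $\|\vc{X}\| = \|\hat{\vc{X}}\|$ and $\|\vc{Y}\| = \|\hat{\vc{Y}}\|$ on $E$, so $\phi(\vc{X}, \vc{Y}) = \phi(\hat{\vc{X}}, \hat{\vc{Y}})$ on $E$. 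Bayes' rule (read as a limit over thin angular shells $\phi \in [\theta, \theta+\delta]$) then gives
\[
p(\theta) = \frac{\Pr(E) \cdot f_{\phi(\hat{\vc{X}}, \hat{\vc{Y}})}(\theta)}{f_{\phi(\vc{X}, \vc{Y})}(\theta)},
\]
where $f$ denotes the indicated density. By Lemma~\ref{lem:sphericalcap}, $f_{\phi(\vc{X}, \vc{Y})}(\theta) = (\sin\theta)^{d + o(d)}$, so it remains to show that $f_{\phi(\hat{\vc{X}}, \hat{\vc{Y}})}(\theta) = e^{o(d)} \max_{x,y>0} \Pr(\vc{Z}_d = (xy\cos\theta, x^2, y^2))$; combining the three pieces then yields the factor $\big(\tfrac{1}{2\sin\theta}\big)^{d+o(d)}$.

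For that remaining claim I would set $(T, P, Q) = (\langle \hat{\vc{X}}, \hat{\vc{Y}} \rangle, \|\hat{\vc{X}}\|^2, \|\hat{\vc{Y}}\|^2) = d\,\vc{Z}_d$, so $\cos\phi(\hat{\vc{X}}, \hat{\vc{Y}}) = T / \sqrt{PQ}$, change variables $(T, P, Q) \mapsto (\phi, P, Q)$ via $T = \sqrt{PQ}\cos\phi$ (Jacobian $\sqrt{PQ}\sin\phi$), and integrate out $P = d x^2$, $Q = d y^2$; using that the density of $\vc{Z}_d$ at $\vc{z}$ is $d^3$ times the density of $(T,P,Q)$ at $d\vc{z}$, a short computation gives
\[
f_{\phi(\hat{\vc{X}}, \hat{\vc{Y}})}(\theta) = 4 \sin\theta \int_0^\infty \!\! \int_0^\infty x^2 y^2 \, \Pr\!\big(\vc{Z}_d = (xy\cos\theta, x^2, y^2)\big) \, dx \, dy .
\]
Since $\Pr(\vc{Z}_d = (xy\cos\theta,x^2,y^2))$ decays super-exponentially in $d$ once $\max(x,y)$ exceeds a fixed constant (the squared half-normal has a genuinely super-linear rate function, so the relevant $\Lambda^*$ grows faster than linearly), the integral equals, up to an $e^{o(d)}$ factor, the integral over a fixed compact box, on which the weight $4\sin\theta\,x^2y^2$ and the volume are $e^{o(d)}$; a routine Laplace lower bound near the maximizer matches this. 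Hence the integral is $e^{o(d)}\max_{x,y>0}\Pr(\vc{Z}_d = (xy\cos\theta, x^2, y^2))$. (For $\theta > \tfrac{\pi}{2}$ the target value of $\langle \hat{\vc{X}}, \hat{\vc{Y}} \rangle$ is negative while the variable is a.s.\ nonnegative, so this probability, and $p(\theta)$, vanishes, matching Theorem~\ref{thm:main}.)

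\emph{Main obstacle.} The delicate point is this last reduction: rigorously collapsing the two-dimensional integral over the norm parameters $x, y$ to its maximum up to $e^{o(d)}$. This needs a uniform tail estimate showing the integrand concentrates on a compact $(x,y)$-region — which ultimately rests on quantitative growth properties of the rate function $\Lambda^*$ of $\vc{Z}_d$ that are only developed later in the proof — together with a standard Laplace-type bound near the maximizer. A secondary point requiring care is the measure-theoretic handling of the conditioning on the probability-zero event $\{\phi(\vc{X}, \vc{Y}) = \theta\}$, which should be made precise via a limiting argument over thin angular shells.
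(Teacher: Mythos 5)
Your proposal is correct and follows essentially the same route as the paper's proof: reduce the collision event to a same-orthant event for conditioned Gaussians, apply Bayes' rule with the angle density from Lemma~\ref{lem:sphericalcap}, pass to half-normal vectors, introduce the norms $x, y$ as integration variables, and collapse the integral to its maximum up to $e^{o(d)}$. You are in fact somewhat more careful than the paper on the two points you flag (the explicit Jacobian in the change of variables and the Laplace-type justification of replacing the integral by its maximum, which the paper simply asserts), and your sketch of how to close them is sound.
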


\begin{proof}
First, we write out the definition of the conditional probability in $p$, and use the fact that each of the $2^d$ hash regions (orthants) has the same probability mass. Here $\vc{X}, \vc{Y} \sim \No(0,1)^d$ denote random Gaussian vectors, and subscripts denoting what probabilities are computed over are omitted when implicit.
\begin{align}
p(\theta) &= \Pr_{h \sim \mathcal{H}}(h(\vc{x}) = h(\vc{y}) \ | \ \phi(\vc{x}, \vc{y}) = \theta) \\
&= 2^d \cdot \Pr_{\vc{X}, \vc{Y} \sim \No(0,1)^d}(\vc{X} > 0, \, \vc{Y} > 0 \ | \ \phi(\vc{X}, \vc{Y}) = \theta) \\
&= \frac{2^d \cdot \Pr(\vc{X} > 0, \, \vc{Y} > 0, \, \phi(\vc{X}, \vc{Y}) = \theta)}{\Pr(\phi(\vc{X}, \vc{Y}) = \theta)} \, . \label{eq:1}
\end{align}  
By Lemma~\ref{lem:sphericalcap}, the denominator is equal to $(\sin \theta)^{d + o(d)}$. The numerator of~\eqref{eq:1} can further be rewritten as a conditional probability on $\{\vc{X} > 0, \vc{Y} > 0\}$, multiplied with $\Pr(\vc{X} > 0, \vc{Y} > 0) = 2^{-2d}$. To incorporate the conditionals $\vc{X}, \vc{Y} > 0$, we replace $\vc{X}, \vc{Y} \sim \No(0,1)^d$ by half-normal vectors $\hat{\vc{X}}, \hat{\vc{Y}} \sim \HNo(0,1)^d$, resulting in:
\begin{align}
p(\theta) = \frac{\Pr_{\hat{\vc{X}}, \hat{\vc{Y}} \sim \HNo(0,1)^d}(\phi(\hat{\vc{X}}, \hat{\vc{Y}}) = \theta)}{\left(2 \sin \theta\right)^{d + o(d)}} = \frac{q(\theta)}{\left(2 \sin \theta\right)^{d + o(d)}} \, .
\end{align}
To incorporate the normalization over the (half-normal) vectors $\hat{\vc{X}}$ and $\hat{\vc{Y}}$, we introduce dummy variables $x, y$ corresponding to the norms of $\hat{\vc{X}}/\sqrt{d}$ and $\hat{\vc{Y}}/\sqrt{d}$, and observe that as the probabilities are exponential in $d$, the integrals will be dominated by the maximum value of the integrand in the given range:
\begin{align}
q(\theta) &= \int_{0}^{\infty} \int_{0}^{\infty} \Pr(\langle \hat{\vc{X}}, \hat{\vc{Y}} \rangle = x \, y \, d \, \cos \theta, \|\hat{\vc{X}}\|^2 = x^2 d, \|\hat{\vc{Y}}\|^2 = y^2 d) \, dx \, dy  \\
 &= 2^{o(d)} \max_{x,y > 0} \Pr\Big(\langle \hat{\vc{X}}, \hat{\vc{Y}} \rangle = x \, y \, d \, \cos \theta, \|\hat{\vc{X}}\|^2 = x^2 d, \|\hat{\vc{Y}}\|^2 = y^2 d\Big) \, . 
\end{align}
Substituting $\vc{Z}_d = \tfrac{1}{d}(\langle \hat{\vc{X}}, \hat{\vc{Y}} \rangle, \|\hat{\vc{X}}\|^2, \|\hat{\vc{Y}}\|^2)$, we obtain the claimed result.
\end{proof}
Note that $Z_1, Z_2, Z_3$ are pairwise but not jointly independent. To compute the density of $\vc{Z}_d$ at $(x y \cos \theta, x^2, y^2)$ for $d \to \infty$, we use the G\"{a}rtner-Ellis theorem stated in Lemma~\ref{lem:ge}.

\begin{lemma} [Applying the G\"{a}rtner-Ellis theorem to $\vc{Z}_d$] \label{lem:ge2}
Let $\{\vc{Z}_d\}_{d \in \mathbb{N}} \subset \mathbb{R}^3$ as in Lemma~\ref{lem:step1}, and let $\Lambda$ and $\Lambda^*$ as in Section~\ref{sec:pre}. Then $\vc{0}$ lies in the interior of $\mathcal{D}_{\Lambda}$, and therefore
\begin{align}
\Pr(\vc{Z}_d = (x y \cos \theta, \, x^2, \, y^2)) = \exp \left(-\Lambda^*(x y \cos \theta, x^2, y^2) d + o(d)\right).
\end{align}
\end{lemma}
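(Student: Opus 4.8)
The plan is to reduce the statement to the single hypothesis required by the G\"artner--Ellis theorem (Lemma~\ref{lem:ge}), namely that $\vc{0}$ lies in the interior of $\mathcal{D}_{\Lambda}$, after which the density asymptotics are immediate. First I would record that $\vc{Z}_d = \frac{1}{d}\sum_{i=1}^d \vc{U}_i$ with $\vc{U}_i = (\hat X_i \hat Y_i, \hat X_i^2, \hat Y_i^2)$ i.i.d., so $\vc{Z}_d$ is an empirical mean in the sense of Section~\ref{sec:pre}; thus Lemma~\ref{lem:ge} applies the moment interiority is established. To check interiority, I would write $\Lambda(\vc{\lambda})$ out explicitly: since the joint density of two independent half-normals is $\frac{2}{\pi} e^{-(x^2+y^2)/2}$ on the positive quadrant, $\Lambda(\vc{\lambda})$ is the logarithm of $\frac{2}{\pi}$ times the double integral over $(0,\infty)^2$ of $\exp\big((\lambda_2 - \tfrac12) x^2 + \lambda_1 x y + (\lambda_3 - \tfrac12) y^2\big)$. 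This is exactly the integral handled by Lemma~\ref{lem:exp}, with $a = \lambda_2 - \tfrac12$, $b = \lambda_1$, $c = \lambda_3 - \tfrac12$, and $D = \lambda_1^2 - 4(\lambda_2 - \tfrac12)(\lambda_3 - \tfrac12)$.

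The core of the argument is then a short continuity check. Lemma~\ref{lem:exp} produces a finite (closed-form) value whenever $a, c < 0$ and $D < 0$; at $\vc{\lambda} = \vc{0}$ these read $-\tfrac12 < 0$, $-\tfrac12 < 0$, $-1 < 0$, so they hold strictly, and since $a, c, D$ depend continuously on $\vc{\lambda}$ they persist on some open ball $B$ around $\vc{0}$. Hence $\Lambda < \infty$ on $B$, so $B \subseteq \mathcal{D}_{\Lambda}$ and $\vc{0}$ is an interior point. Invoking Lemma~\ref{lem:ge} with $F$ a shrinking ball around $(x y \cos\theta, x^2, y^2)$, absorbing the polynomial volume factor into the $o(d)$ term and using (lower semi)continuity of $\Lambda^*$ on the interior of its effective domain as the ball shrinks, then yields the stated density asymptotics, with the convention $\exp(-\infty) = 0$ (relevant when $\cos\theta < 0$, where the probability genuinely vanishes since half-normal vectors cannot have negative inner product).

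I do not expect a serious obstacle: the lemma amounts to feeding Lemma~\ref{lem:exp} into Lemma~\ref{lem:ge}. The only mildly delicate point is the last step, passing from the G\"artner--Ellis large-deviation rate for sets to the informal pointwise density $\Pr(\vc{Z}_d = \vc{z}) = \exp(-\Lambda^*(\vc{z}) d + o(d))$; this is standard but worth spelling out via the shrinking-ball argument above. The genuinely hard work, namely evaluating $\Lambda^*$ at $(x y \cos\theta, x^2, y^2)$ through the case analysis, is deferred to the subsequent lemmas and is not needed here.
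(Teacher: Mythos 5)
Your proposal is correct and follows essentially the same route as the paper, which (in the lemma immediately following this one) computes $\Lambda$ via Lemma~\ref{lem:exp} with $(a,b,c) = (\lambda_2 - \tfrac12, \lambda_1, \lambda_3 - \tfrac12)$, obtaining $\mathcal{D}_{\Lambda} = \{\lambda_2, \lambda_3 < \tfrac12, \, D < 0\}$, from which interiority of $\vc{0}$ (where $a = c = -\tfrac12$ and $D = -1$) is immediate before invoking Lemma~\ref{lem:ge}. Your explicit continuity check and the shrinking-ball passage from set probabilities to the pointwise density are details the paper leaves implicit, and they are handled correctly.
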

Essentially, all that remains now is computing $\Lambda^*$ at the appropriate point $\vc{z}$. To continue, we first compute the logarithmic moment generating function $\Lambda = \Lambda_d$ of $\vc{Z}_d$:

\begin{lemma} [Computing $\Lambda$]
Let $\vc{Z}_d$ as before, and let $D = D(\lambda_1, \lambda_2, \lambda_3) = \lambda_1^2 - (1 - 2 \lambda_2)(1 - 2 \lambda_3)$. Then for $\vc{\lambda} \in \mathcal{D}_{\Lambda} = \{\vc{\lambda} \in \mathbb{R}^3: \lambda_2, \lambda_3 < \frac{1}{2}, D < 0\}$ we have:
\begin{align}
\Lambda(\vc{\lambda}) = \ln \left(\pi + 2 \arctan\left(\frac{\lambda_1}{\sqrt{-D}}\right)\right) - \ln \pi - \tfrac{1}{2} \ln(-D).
\end{align}
\end{lemma}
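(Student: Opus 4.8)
The goal is to compute the logarithmic moment generating function $\Lambda(\vc{\lambda}) = \ln \expn[\exp\langle \vc{\lambda}, \vc{U}_1\rangle]$ where $\vc{U}_1 = (\hat{X}_1 \hat{Y}_1, \hat{X}_1^2, \hat{Y}_1^2)$ with $\hat{X}_1, \hat{Y}_1 \sim \HNo(0,1)$ independent, i.e.\ $\hat{X}_1 = |X|$, $\hat{Y}_1 = |Y|$ for $X, Y \sim \No(0,1)$. The plan is to write this expectation as an explicit two-dimensional integral over the positive quadrant and recognize it as an instance of Lemma~\ref{lem:exp}.

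Concretely, since $\hat{X}_1^2 = X^2$ and $\hat{Y}_1^2 = Y^2$ and $\hat{X}_1 \hat{Y}_1 = |XY|$, I would first argue that $\expn[\exp(\lambda_1 |XY| + \lambda_2 X^2 + \lambda_3 Y^2)]$ over $X, Y \sim \No(0,1)$ equals $4$ times the integral over the positive quadrant of $\frac{1}{2\pi}\exp(\lambda_1 xy + \lambda_2 x^2 + \lambda_3 y^2) \exp(-\tfrac12 x^2 - \tfrac12 y^2)\, dx\, dy$, using the symmetry of the Gaussian density under sign changes of $x$ and $y$ (each of the four sign-quadrants contributes equally once the integrand is written in terms of $|x|, |y|$). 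Collecting the quadratic terms, the exponent becomes $a x^2 + b xy + c y^2$ with $a = \lambda_2 - \tfrac12$, $b = \lambda_1$, $c = \lambda_3 - \tfrac12$. Then I invoke Lemma~\ref{lem:exp}: the hypotheses $a, c < 0$ translate to $\lambda_2, \lambda_3 < \tfrac12$, and $D = b^2 - 4ac = \lambda_1^2 - (1 - 2\lambda_2)(1 - 2\lambda_3) < 0$ is precisely the stated condition defining $\mathcal{D}_\Lambda$; note $\sqrt{-D} = \sqrt{4ac - b^2}$ matches the normalization in the lemma. This yields the integral value $\frac{\pi + 2\arctan(\lambda_1/\sqrt{-D})}{2\sqrt{-D}}$.

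Putting the pieces together, $\expn[\exp\langle\vc{\lambda},\vc{U}_1\rangle] = \frac{4}{2\pi} \cdot \frac{\pi + 2\arctan(\lambda_1/\sqrt{-D})}{2\sqrt{-D}} = \frac{\pi + 2\arctan(\lambda_1/\sqrt{-D})}{\pi\sqrt{-D}}$, and taking logarithms gives $\Lambda(\vc{\lambda}) = \ln(\pi + 2\arctan(\lambda_1/\sqrt{-D})) - \ln\pi - \tfrac12\ln(-D)$, as claimed. I would also note briefly that outside the region $\{\lambda_2,\lambda_3 < \tfrac12,\ D < 0\}$ the quadratic form fails to be negative definite on the positive quadrant, so the integral (hence the expectation) diverges, justifying the identification of $\mathcal{D}_\Lambda$.

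The only genuinely non-routine point is the reduction from the Gaussian expectation over all of $\mathbb{R}^2$ to $4$ times the positive-quadrant integral: one must be careful that the cross term enters as $\lambda_1 |XY|$ rather than $\lambda_1 XY$, so the integrand is even in each variable separately and the four-fold symmetry is exact. Everything else is a direct substitution into Lemma~\ref{lem:exp} and bookkeeping of the Gaussian normalization constant $1/(2\pi)$. I expect this lemma to be short; the real work in the paper lies in the subsequent computation of the Fenchel--Legendre transform $\Lambda^*$ and its optimization, not here.
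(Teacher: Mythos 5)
Your proposal is correct and follows essentially the same route as the paper: both reduce the expectation to the positive-quadrant integral of $\exp(a x^2 + b xy + c y^2)$ with $(a,b,c) = (\lambda_2 - \tfrac12, \lambda_1, \lambda_3 - \tfrac12)$ and invoke Lemma~\ref{lem:exp}; the only cosmetic difference is that the paper integrates directly against the half-normal densities $\sqrt{2/\pi}\,e^{-x^2/2}$ rather than passing through full Gaussians and the four-fold sign symmetry, and both normalizations agree ($4/(2\pi) = 2/\pi$).
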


\begin{proof}
By the definition of the LMGF, we have:
\begin{align}
\Lambda(\vc{\lambda}) &= \ln \expn_{\hat{X}_1, \hat{Y}_1 \sim \HNo(0,1)} \left[\exp \left(\lambda_1 \hat{X}_1 \hat{Y}_1 + \lambda_2 \hat{X}_1^2 + \lambda_3 \hat{Y}_1^2 \right) \right].
\end{align}
We next compute the inner expectation over the random variables $\hat{X}_1, \hat{Y}_1$, by writing out the double integral over the product of the argument with the densities of $\hat{X}_1$ and $\hat{Y}_1$.
\begin{align}
& \expn_{X_1, Y_1} \left[\exp \left(\lambda_1 X_1 Y_1 + \lambda_2 X_1^2 + \lambda_3 Y_1^2 \right) \right] \\
 &= \int_0^{\infty} \sqrt{\frac{2}{\pi}} \, \exp\left(-\frac{x^2}{2}\right) d x \int_0^{\infty} \sqrt{\frac{2}{\pi}} \, \exp\left(-\frac{y^2}{2}\right) dy \, \exp \left( \lambda_1 x y + \lambda_2 x^2 + \lambda_3 y^2 \right) \\
 &= \frac{2}{\pi} \int_0^{\infty} \int_0^{\infty} \exp \left( \lambda_1 x y + \left(\lambda_2 - \tfrac{1}{2}\right) x^2 + \left(\lambda_3 - \tfrac{1}{2}\right) y^2 \right) \, dx \, dy \, .
\end{align}
Applying Lemma~\ref{lem:exp} with $(a, b, c) = (\lambda_2  - \tfrac{1}{2}, \lambda_1, \lambda_3 - \tfrac{1}{2})$ yields the claimed expression for $\Lambda$, as well as the bounds stated in $\mathcal{D}_{\Lambda}$ which are necessary for the expectation to be finite.
\end{proof}

We now continue with computing the Fenchel-Legendre transform of $\Lambda$, which involves a rather complicated maximization (supremum) over $\vc{\lambda} \in \mathbb{R}^3$. The following lemma makes a first step towards computing this supremum.

\begin{lemma}[Computing $\Lambda^*(\vc{z})$ -- General form]
Let $\vc{z} \in \mathbb{R}^3$ such that $z_2, z_3 > 0$. Then the Fenchel-Legendre transform $\Lambda^*$ of $\Lambda$ at $\vc{z}$ satisfies 
\begin{align}
\Lambda^*(\vc{z}) &= \ln \pi + \sup_{\substack{\lambda_1, \beta \\ \beta > 1}} \left\{\frac{z_2}{2} + \frac{z_3}{2} + \lambda_1 z_1 - |\lambda_1| \beta \sqrt{z_2 z_3} + \frac{1}{2} \ln (\beta^2 - 1) + \ln |\lambda_1| \right. \\
& \qquad \qquad \qquad \qquad \left. - \ln \Big(\pi + 2 \arctan\Big(\frac{\lambda_1}{|\lambda_1| \sqrt{\beta^2 - 1}}\Big)\Big)\right\} .
\end{align}
\end{lemma}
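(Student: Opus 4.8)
The plan is to reduce the three-dimensional supremum defining $\Lambda^*(\vc{z})$ to the claimed two-dimensional one by performing the optimisation over $\lambda_2$ and $\lambda_3$ in closed form, using that $\Lambda(\vc{\lambda})$ depends on $\lambda_2, \lambda_3$ only through the product $(1 - 2\lambda_2)(1 - 2\lambda_3)$. First I would substitute $u = 1 - 2\lambda_2$ and $v = 1 - 2\lambda_3$, so that the constraints $\lambda_2, \lambda_3 < \tfrac12$ become $u, v > 0$, while $-D = uv - \lambda_1^2$. Since $\lambda_2 z_2 + \lambda_3 z_3 = \tfrac{z_2 + z_3}{2} - \tfrac{u z_2 + v z_3}{2}$, the objective $\langle \vc{\lambda}, \vc{z} \rangle - \Lambda(\vc{\lambda})$ becomes
\begin{align}
\lambda_1 z_1 + \frac{z_2 + z_3}{2} - \frac{u z_2 + v z_3}{2} + \ln \pi + \frac12 \ln(uv - \lambda_1^2) - \ln\Big(\pi + 2 \arctan \tfrac{\lambda_1}{\sqrt{uv - \lambda_1^2}}\Big),
\end{align}
to be maximised over $\lambda_1 \in \mathbb{R}$ and $u, v > 0$ with $uv > \lambda_1^2$ (recall $\Lambda = +\infty$ outside $\mathcal{D}_{\Lambda}$, so the supremum is effectively over $\mathcal{D}_{\Lambda}$).

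The one genuine step is the inner optimisation over $(u, v)$. Fix $\lambda_1$ and fix the product $t := uv > \lambda_1^2$; then the only term above still depending on $(u, v)$ is $-\tfrac12(u z_2 + v z_3)$. As $z_2, z_3 > 0$, the AM--GM inequality gives $u z_2 + v z_3 \ge 2 \sqrt{uv \, z_2 z_3} = 2\sqrt{t z_2 z_3}$, with equality exactly when $u z_2 = v z_3$, i.e.\ at the admissible point $u = \sqrt{t z_3 / z_2}$, $v = \sqrt{t z_2 / z_3}$, which is strictly positive. Hence the inner optimum equals $-\sqrt{t z_2 z_3}$, and
\begin{align}
\Lambda^*(\vc{z}) = \sup_{\lambda_1 \in \mathbb{R}, \ t > \lambda_1^2} \Big\{ \lambda_1 z_1 + \frac{z_2 + z_3}{2} - \sqrt{t z_2 z_3} + \ln \pi + \frac12 \ln(t - \lambda_1^2) - \ln\Big(\pi + 2 \arctan \tfrac{\lambda_1}{\sqrt{t - \lambda_1^2}}\Big) \Big\}.
\end{align}

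Finally I would substitute $t = \beta^2 \lambda_1^2$ for $\lambda_1 \ne 0$ and $\beta > 0$: the constraint $t > \lambda_1^2$ becomes $\beta > 1$, and $\sqrt{t} = |\lambda_1| \beta$, $\sqrt{t - \lambda_1^2} = |\lambda_1| \sqrt{\beta^2 - 1}$, $\tfrac12 \ln(t - \lambda_1^2) = \ln |\lambda_1| + \tfrac12 \ln(\beta^2 - 1)$, and $\tfrac{\lambda_1}{\sqrt{t - \lambda_1^2}} = \tfrac{\lambda_1}{|\lambda_1| \sqrt{\beta^2 - 1}}$; substituting these and pulling $\ln \pi$ out of the supremum yields exactly the stated expression. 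The locus $\lambda_1 = 0$, which this change of variables omits, is harmless: along the curve $\lambda_1 \to 0$, $\beta \to \infty$ with $|\lambda_1| \beta = \sqrt{t}$ held fixed, the objective converges to the value of the pre-substitution objective at $\lambda_1 = 0$, so the supremum is the same whether or not $\lambda_1 = 0$ is included. The main obstacle is thus really just the AM--GM step together with the bookkeeping of the two substitutions; the points requiring care are that $z_2, z_3 > 0$ is exactly what keeps the optimal $(u, v)$ inside the open domain, so that the reduced supremum is over the same set, and the limiting argument justifying the omission of $\lambda_1 = 0$ when passing to the $\beta$-parametrisation.
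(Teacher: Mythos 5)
Your proposal is correct and follows essentially the same route as the paper: substitute $u = 1-2\lambda_2$, $v = 1-2\lambda_3$, optimise over $(u,v)$ at fixed product in closed form (the paper does this by differentiating in $t_3$, you by AM--GM, reaching the same optimum $u = \sqrt{t z_3/z_2}$, $v = \sqrt{t z_2/z_3}$), and then reparametrise the product as $\beta^2\lambda_1^2$. Your extra remark that the $\lambda_1 = 0$ locus is recovered as the limit $\beta \to \infty$ with $|\lambda_1|\beta$ fixed is a small point of care the paper omits, but does not change the argument.
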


\begin{proof}
First, we recall the definition of $\Lambda^*$ and substitute the previous expression for $\Lambda$:
\begin{align}
\Lambda^*(\vc{z}) &= \sup_{\vc{\lambda} \in \mathbb{R}^3} \left\{\langle \vc{\lambda}, \vc{z} \rangle - \Lambda(\vc{\lambda})\right\} \\
 &= \ln \pi + \sup\limits_{\vc{\lambda} \in \mathbb{R}^3} \left\{\langle \vc{\lambda}, \vc{z} \rangle + \ln \sqrt{-D} - \ln \left(\pi + 2 \arctan\left(\frac{\lambda_1}{\sqrt{-D}}\right)\right) \right\}.
\end{align}
Here as before $D = \lambda_1^2 - (1 - 2 \lambda_2)(1 - 2 \lambda_3) < 0$. Let the argument of the supremum above be denoted by $f(\vc{z}, \vc{\lambda})$. We make a change of variables by setting $t_2 = 1 - 2 \lambda_2 > 0$ and $t_3 = 1 - 2 \lambda_3 > 0$, so that $D$ becomes $D = \lambda_1^2 - t_2 t_3 < 0$:
\begin{align}
f(\vc{z}, \lambda_1, t_2, t_3) &= \frac{z_2}{2} + \frac{z_3}{2} + \lambda_1 z_1 - \frac{t_2 z_2}{2} - \frac{t_3 z_3}{2} \\
&+ \tfrac{1}{2} \ln (t_2 t_3 - \lambda_1^2) - \ln \Big(\pi + 2 \arctan\Big(\tfrac{\lambda_1}{\sqrt{t_2 t_3 - \lambda_1^2}}\Big)\Big).
\end{align}
We continue by making a further change of variables $u = t_2 t_3 > \lambda_1^2$ so that $t_2 = u / t_3$. As a result the dependence of $f$ on $t_3$ is only through the fourth and fifth terms above, from which one can easily deduce that the supremum over $t_3$ occurs at $t_3 = \sqrt{u z_2 / z_3}$. This also implies that $t_2 = \sqrt{u z_3 / z_2}$. Substituting these values for $t_2, t_3$, we obtain:
\begin{align*}
f(\vc{z}, \lambda_1, u) &= \tfrac{z_2}{2} + \tfrac{z_3}{2} + \lambda_1 z_1 - \sqrt{u z_2 z_3} + \tfrac{1}{2} \ln (u - \lambda_1^2) - \ln \Big(\pi + 2 \arctan\Big(\tfrac{\lambda_1}{\sqrt{u - \lambda_1^2}}\Big)\Big). 
\end{align*}
Finally, we use the substitution $u = \beta^2 \cdot \lambda_1^2$. From $D < 0$ it follows that $u / \lambda_1^2 = \beta > 1$. This substitution and some rewriting of $f$ leads to the claimed result.
\end{proof}

The previous simplifications were regardless of $z_1, z_2, z_3$, where the only assumption that was made during the optimization of $t_3$ was that $z_2, z_3 > 0$. In our application, we want to compute $\Lambda^*$ at $\vc{z} = (x y \cos \theta, x^2, y^2)$ for certain $x, y > 0$ and $\theta \in (0, \frac{\pi}{2})$. Substituting these values for $\vc{z}$, the expression from Lemma~\ref{lem:step1} becomes:
\begin{align}
\Lambda^*(x y \cos \theta, x^2, y^2) &= \ln \pi + \frac{x^2}{2} + \frac{y^2}{2} + \sup_{\substack{\lambda_1, \beta \\ \beta > 1}} \left\{(\lambda_1 \cos \theta - |\lambda_1| \beta) x y + \frac{1}{2} \ln (\beta^2 - 1) \right. \\
& \qquad \qquad \left. + \ln |\lambda_1| - \ln \Big(\pi + 2 \arctan\Big(\frac{\lambda_1}{|\lambda_1| \sqrt{\beta^2 - 1}}\Big)\Big)\right\} . \label{eq:2}
\end{align}
The remaining optimization over $\lambda_1, \beta$ now takes slightly different forms depending on whether $\lambda_1 < 0$ or $\lambda_1 > 0$. We will tackle these two cases separately, based on the identity:
\begin{align*}
\Lambda^*(\vc{z}) = \max \Big\{\sup_{\substack{\vc{\lambda} \in \mathbb{R}^3 \\ \lambda_1 > 0}} \left\{\langle \vc{\lambda}, \vc{z} \rangle - \Lambda(\vc{\lambda})\right\}, \ \sup_{\substack{\vc{\lambda} \in \mathbb{R}^3 \\ \lambda_1 < 0}} \left\{\langle \vc{\lambda}, \vc{z} \rangle - \Lambda(\vc{\lambda})\right\}\Big\} = \max\{\Lambda^*_{+}(\vc{z}), \Lambda^*_{-}(\vc{z})\}. \label{eq:split}
\end{align*}

\begin{lemma} [Computing $\Lambda^*(\vc{z})$ for positive $\lambda_1$]
Let $\vc{z} = (x y \cos \theta, x^2, y^2)$ with $x, y > 0$ and $\theta \in (0, \frac{\pi}{2})$. For $\theta \in (0, \arccos \frac{2}{\pi})$, let $\beta_0 = \beta_0(\theta) \in (1, \infty)$ be the unique solution to~\eqref{eq:beta}.
Then the Fenchel-Legendre transform $\Lambda^*$ at $\vc{z}$, restricted to $\lambda_1 > 0$, satisfies 
\begin{align}
\Lambda^*_{+}(\vc{z}) &= \frac{x^2}{2} + \frac{y^2}{2} - 1 - \ln(xy) + \begin{cases} 
\ln \left(\dfrac{\pi \beta_0 (\beta_0 \cos \theta - 1)}{2 (\beta_0 - \cos \theta)^2 }\right), & \text{if } \theta \in (0, \arccos \frac{2}{\pi}); \\[3ex]
0, & \text{if } \theta \in [\arccos \frac{2}{\pi}, \frac{\pi}{2}).
\end{cases}
\end{align}
\end{lemma}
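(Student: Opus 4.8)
The plan is to specialize the supremum in~\eqref{eq:2} to $\lambda_1>0$, collapse it to a one-variable optimization over $\beta$, and recognize the resulting first-order condition as precisely~\eqref{eq:beta}. The first step is to simplify the arctangent: for $\lambda_1>0$ we have $\lambda_1/(|\lambda_1|\sqrt{\beta^2-1})=1/\sqrt{\beta^2-1}$, and the elementary identity $\arctan\!\bigl(1/\sqrt{\beta^2-1}\bigr)=\arcsin(1/\beta)$ for $\beta>1$ gives $\pi+2\arctan\!\bigl(1/\sqrt{\beta^2-1}\bigr)=\pi+2\arcsin(1/\beta)=2\arccos(-1/\beta)$. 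Hence the quantity inside the supremum in~\eqref{eq:2} equals $\lambda_1(\cos\theta-\beta)\,xy+\tfrac{1}{2}\ln(\beta^2-1)+\ln\lambda_1-\ln\!\bigl(2\arccos(-1/\beta)\bigr)$.

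Next I would optimize over $\lambda_1>0$ for fixed $\beta>1$. Since $\beta>1>\cos\theta$, the function $\lambda_1\mapsto\lambda_1(\cos\theta-\beta)xy+\ln\lambda_1$ is strictly concave with unique maximizer $\lambda_1^\star=1/\bigl((\beta-\cos\theta)xy\bigr)$, and its value there is $-1-\ln(\beta-\cos\theta)-\ln(xy)$. This reduces~\eqref{eq:2} (restricted to $\lambda_1>0$) to
\begin{align*}
\Lambda^*_{+}(\vc{z}) &= \ln\pi+\tfrac{x^2}{2}+\tfrac{y^2}{2}+\sup_{\beta>1}g(\beta), \\
g(\beta) &:= -1-\ln(xy)-\ln(\beta-\cos\theta)+\tfrac{1}{2}\ln(\beta^2-1)-\ln\!\bigl(2\arccos(-1/\beta)\bigr).
\end{align*}

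Then I would analyze $g$ on $(1,\infty)$. As $\beta\to1^{+}$ the term $\tfrac{1}{2}\ln(\beta^2-1)\to-\infty$, so the left endpoint is never the maximizer; as $\beta\to\infty$ the terms $-\ln(\beta-\cos\theta)$ and $\tfrac{1}{2}\ln(\beta^2-1)$ cancel to leading order and $g(\beta)\to-1-\ln(xy)-\ln\pi$, with next-order term proportional to $\cos\theta-\tfrac{2}{\pi}$, so this limit is approached from above when $\theta<\arccos\tfrac{2}{\pi}$ and from below when $\theta>\arccos\tfrac{2}{\pi}$. Differentiating $g$ and using $\tfrac{d}{d\beta}\arccos(-1/\beta)=-1/\bigl(\beta\sqrt{\beta^2-1}\bigr)$, the equation $g'(\beta)=0$ rearranges exactly to $\arccos(-1/\beta)=(\beta-\cos\theta)\sqrt{\beta^2-1}/\bigl(\beta(\beta\cos\theta-1)\bigr)$, i.e.\ to~\eqref{eq:beta}, and any root must satisfy $\beta>\sec\theta$ since the right-hand side is negative otherwise. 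I would then argue that for $\theta\in(0,\arccos\tfrac{2}{\pi})$ this equation has a unique root $\beta_0\in(1,\infty)$, which is the global maximizer of $g$ (the function rises from $-\infty$ to $g(\beta_0)$ and then decreases to its limit), while for $\theta\in[\arccos\tfrac{2}{\pi},\tfrac{\pi}{2})$ one has $g'>0$ throughout $(1,\infty)$ so that $\sup_{\beta}g(\beta)=-1-\ln(xy)-\ln\pi$, giving $\Lambda^*_{+}(\vc{z})=\tfrac{x^2}{2}+\tfrac{y^2}{2}-1-\ln(xy)$ --- the ``$0$'' case. In the remaining regime, substituting~\eqref{eq:beta} into $g(\beta_0)$ yields $\ln\!\bigl(2\arccos(-1/\beta_0)\bigr)=\ln2+\ln(\beta_0-\cos\theta)+\tfrac{1}{2}\ln(\beta_0^2-1)-\ln\beta_0-\ln(\beta_0\cos\theta-1)$, so after cancellation $g(\beta_0)=-1-\ln(xy)+\ln\!\bigl(\beta_0(\beta_0\cos\theta-1)/(2(\beta_0-\cos\theta)^2)\bigr)$, which combined with the leading $\ln\pi$ is exactly the stated expression.

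The main obstacle is the global analysis of~\eqref{eq:beta}: proving it has exactly one solution in $(1,\infty)$ when $\theta<\arccos\tfrac{2}{\pi}$ and none when $\theta\ge\arccos\tfrac{2}{\pi}$, and confirming that the relevant solution is a maximum of $g$, not a minimum or inflection point. Concretely this amounts to studying $h_\theta(\beta):=\arccos(-1/\beta)-(\beta-\cos\theta)\sqrt{\beta^2-1}/\bigl(\beta(\beta\cos\theta-1)\bigr)$ on $(\sec\theta,\infty)$, whose limits are $-\infty$ as $\beta\to(\sec\theta)^{+}$ and $\tfrac{\pi}{2}-\sec\theta$ as $\beta\to\infty$ (positive precisely when $\theta<\arccos\tfrac{2}{\pi}$); since on $(\sec\theta,\infty)$ the sign of $g'$ is the opposite of the sign of $h_\theta$ (and $g'>0$ on $(1,\sec\theta]$), a monotonicity statement for $h_\theta$ --- e.g.\ that it is increasing, or at least crosses zero only once --- simultaneously pins down the number of roots of~\eqref{eq:beta} and identifies $\beta_0$ as the maximizer. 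The algebraic simplification in the last step is routine once this is in place.
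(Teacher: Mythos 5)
Your proposal is correct and follows essentially the same route as the paper: optimize out $\lambda_1$ at $\lambda_1^\star = 1/((\beta-\cos\theta)xy)$, reduce to a one-variable supremum over $\beta$ whose stationarity condition is exactly~\eqref{eq:beta}, split into the two regimes according to the sign of the $1/\beta$ correction of the limit at $\beta\to\infty$, and back-substitute~\eqref{eq:beta} to get the closed form (your $2\arccos(-1/\beta)$ is the paper's $\pi+2\arcsin(1/\beta)$). The uniqueness/monotonicity issue you flag at the end is left at a comparable level of detail in the paper's own proof, so there is no gap relative to it.
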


\begin{proof}
Substituting $\lambda_1 > 0$ into~\eqref{eq:2}, we obtain:
\begin{align}
& \Lambda^*_{+}(x y \cos \theta, x^2, y^2) = \ln \pi + \frac{x^2}{2} + \frac{y^2}{2} + \sup_{\substack{\lambda_1 > 0 \\ \beta > 1}} \Big\{g_+(\lambda_1, \beta)\Big\}, \\
& g_+(\lambda_1, \beta) = (\cos \theta - \beta) \lambda_1 x y + \frac{\ln (\beta^2 - 1)}{2} + \ln \lambda_1 - \ln \Big(\pi + 2 \arctan\Big(\frac{1}{\sqrt{\beta^2 - 1}}\Big)\Big).
\end{align}
Differentiating w.r.t. $\lambda_1$ gives $(\cos \theta - \beta) x y + \frac{1}{\lambda_1}$. Recall that $\beta > 1 > \cos \theta$. For $\lambda_1 \to 0^+$ the derivative is therefore positive, for $\lambda_1 \to \infty$ it is negative, and there is a global maximum at the only root $\lambda_1 = 1 / ((\beta - \cos \theta) x y)$. In that case, the expression further simplifies 
and we can pull out more terms that do not depend on $\beta$, to obtain:
\begin{align}
& \Lambda^*_{+}(x y \cos \theta, x^2, y^2) = \ln \pi + \frac{x^2}{2} + \frac{y^2}{2} - 1 - \ln(xy) + \sup_{\beta > 1} \Big\{g_+(\beta)\Big\}, \\
& g_+(\beta) = \ln \left( \frac{\sqrt{\beta^2 - 1}}{(\beta - \cos \theta) \left(\pi + 2 \arcsin \frac{1}{\beta}\right)}\right) = \ln h_+(\beta).
\end{align}
Here we used the identity $\arctan(1/\sqrt{\beta^2-1}) = \arcsin(1/\beta)$. 
Now, for $\beta \to 1^+$ we have $h_+(\beta) \to 0^+$, while for $\beta \to \infty$, we have 
\begin{align}
h_+(\beta) = \frac{1}{\pi} + \frac{1}{\pi \beta} \left(\cos \theta - \frac{2}{\pi}\right) + O\left(\frac{1}{\beta^2}\right).
\end{align}
In other words, if $\cos \theta \leq \frac{2}{\pi}$ or $\theta \geq \arccos \frac{2}{\pi}$, we have $h_+(\beta) \to (\tfrac{1}{\pi})^-$ (the second order term is negative for $\cos \theta = \frac{2}{\pi}$), while for $\theta < \arccos \frac{2}{\pi}$ we approach the same limit from above as $h_+(\beta) \to (\tfrac{1}{\pi})^+$. For $\theta < \arccos \frac{2}{\pi}$ there is a non-trivial maximum at some value $\beta = \beta_0 \in (1, \infty)$, while for $\theta \geq \arccos \frac{2}{\pi}$, we can see from the derivative $h_+'(\beta)$ that $h_+(\beta)$ is strictly increasing on $(1, \infty)$, and the supremum is attained at $\beta \to \infty$. We therefore obtain two different results, depending on whether $\theta < \arccos \frac{2}{\pi}$ or $\theta \geq \arccos \frac{2}{\pi}$.

\textbf{Case 1}: $\arccos \tfrac{2}{\pi} \leq \theta < \tfrac{\pi}{2}$. The supremum is attained in the limit of $\beta \to \infty$, which leads to $h_+(\beta) \to \frac{1}{\pi}$ and the stated expression for $\Lambda^*_{+}(x y \cos \theta, x^2, y^2)$.

\textbf{Case 2}: $0 < \theta < \arccos \tfrac{2}{\pi}$. In this case there is a non-trivial maximum at some value $\beta = \beta_0$, namely there where the derivative $h_+'(\beta_0) = 0$. After computing the derivative, eliminating the (positive) denominator and rewriting, this condition is equivalent to~\eqref{eq:beta}.
This allows us to rewrite $g$ and $\Lambda^*$ in terms of $\beta_0$, by substituting the given expression for $\arcsin\left(\frac{1}{\beta_0}\right)$, which ultimately leads to the stated formula for $\Lambda^*_+$.
\end{proof}

\begin{lemma} [Computing $\Lambda^*(\vc{z})$ for negative $\lambda_1$]
Let $\vc{z} = (x y \cos \theta, x^2, y^2)$ with $x, y > 0$ and $\theta \in (0, \frac{\pi}{2})$. For $\theta \in (\arccos \frac{2}{\pi}, \frac{\pi}{3})$, let $\beta_1 \in (1, \infty)$ be the unique solution to~\eqref{eq:beta}. Then the Fenchel-Legendre transform $\Lambda^*$ at $\vc{z}$, restricted to $\lambda_1 < 0$, satisfies 
\begin{align}
\Lambda^*_{-}(\vc{z}) &= \frac{x^2}{2} + \frac{y^2}{2} - 1 - \ln(x y) + \begin{cases} 
0, & \text{if } \theta \in (0, \arccos \frac{2}{\pi}]; \\[3ex]
\ln\left(\dfrac{\pi \beta_1 (\beta_1 \cos \theta + 1)}{2 (\cos \theta + \beta_1)^2}\right), & \text{if } \theta \in (\arccos \frac{2}{\pi}, \frac{\pi}{3}); \\[3ex]
\ln\left(\dfrac{\pi}{2(1 + \cos \theta)}\right), & \text{if } \theta \in [\frac{\pi}{3}, \frac{\pi}{2}).
\end{cases}
\end{align}
\end{lemma}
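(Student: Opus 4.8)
The plan is to run the argument of the preceding lemma with $\lambda_1<0$ in place of $\lambda_1>0$. Substituting $\lambda_1<0$ into~\eqref{eq:2} and writing $|\lambda_1|=-\lambda_1$, the coefficient of $xy$ becomes $\lambda_1(\beta+\cos\theta)$, while $\tfrac{\lambda_1}{|\lambda_1|\sqrt{\beta^2-1}}=\tfrac{-1}{\sqrt{\beta^2-1}}$, so the last term reads $-\ln\bigl(\pi-2\arctan\tfrac{1}{\sqrt{\beta^2-1}}\bigr)=-\ln\bigl(\pi-2\arcsin\tfrac1\beta\bigr)=-\ln\bigl(2\arccos\tfrac1\beta\bigr)$, using $\arcsin t+\arccos t=\tfrac\pi2$. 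First I would optimize over $\lambda_1<0$: the $\lambda_1$-derivative $(\beta+\cos\theta)xy+\tfrac1{\lambda_1}$ vanishes only at $\lambda_1=-\tfrac{1}{(\beta+\cos\theta)xy}$ (which lies in $(-\infty,0)$ since $\beta+\cos\theta>0$), and this is the global maximum. Substituting it, the $\beta$-independent part $-1-\ln(xy)$ factors out and the problem reduces to
\[
\Lambda^*_{-}(\vc{z}) = \ln\pi + \tfrac{x^2}{2} + \tfrac{y^2}{2} - 1 - \ln(xy) + \sup_{\beta>1}\ln h_-(\beta), \qquad h_-(\beta) = \frac{\sqrt{\beta^2-1}}{2(\beta+\cos\theta)\,\arccos(1/\beta)} .
\]

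The rest is an analysis of $h_-$ on $(1,\infty)$, most cleanly via the substitution $\beta=\sec\alpha$ with $\alpha\in(0,\tfrac\pi2)$, under which $\arccos(1/\beta)=\alpha$ and $\sqrt{\beta^2-1}=\tan\alpha$. This yields the boundary value $h_-(\beta)\to\tfrac{1}{2(1+\cos\theta)}$ as $\beta\to1^+$ and, expanding at $\beta\to\infty$, $h_-(\beta)=\tfrac1\pi+\tfrac{1}{\pi\beta}\bigl(\tfrac2\pi-\cos\theta\bigr)+O(\beta^{-2})$; the same expansions show $h_-'$ is positive near $\beta=1$ iff $\tfrac13>\tfrac{1}{2(1+\cos\theta)}$, i.e.\ iff $\theta<\tfrac\pi3$, and positive near $\beta=\infty$ iff $\cos\theta>\tfrac2\pi$, i.e.\ iff $\theta<\arccos\tfrac2\pi$. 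For the interior critical points I would compute
\[
\frac{d}{d\beta}\ln h_-(\beta) = \frac{\beta}{\beta^2-1} - \frac{1}{\beta+\cos\theta} - \frac{1}{\beta\sqrt{\beta^2-1}\,\arccos(1/\beta)},
\]
and, clearing denominators, identify its zeros with the solutions of the second equation in~\eqref{eq:beta}; under $\beta=\sec\alpha$ that equation becomes $\cos\theta=\psi(\alpha):=\tfrac{\sin\alpha-\alpha\cos\alpha}{\alpha-\sin\alpha\cos\alpha}$, with $\psi(0^+)=\tfrac12$ and $\psi(\tfrac\pi2{}^-)=\tfrac2\pi$. The crucial step is to show $\psi$ is strictly increasing on $(0,\tfrac\pi2)$, so that~\eqref{eq:beta} has a unique root $\beta_1>1$ exactly when $\cos\theta\in(\tfrac12,\tfrac2\pi)$, i.e.\ $\theta\in(\arccos\tfrac2\pi,\tfrac\pi3)$, and no root otherwise.

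Assembling the pieces gives the three branches, using that a continuous $h_-'$ with no zero on $(1,\infty)$ has constant sign there. If $\theta\in(0,\arccos\tfrac2\pi]$ there is no interior critical point and, since $h_-'>0$ near $\beta=1$ and $h_-(1^+)=\tfrac{1}{2(1+\cos\theta)}<\tfrac1\pi$, the function increases monotonically to its supremum $\tfrac1\pi$; hence $\ln\pi+\sup_{\beta>1}\ln h_-(\beta)=\ln\pi-\ln\pi=0$. If $\theta\in[\tfrac\pi3,\tfrac\pi2)$ there is again no interior critical point, but now $h_-'<0$ near $\beta=\infty$ and $h_-(\infty)=\tfrac1\pi<\tfrac{1}{2(1+\cos\theta)}$, so $h_-$ decreases monotonically and its supremum is $\lim_{\beta\to1^+}h_-(\beta)=\tfrac{1}{2(1+\cos\theta)}$, giving $\ln\tfrac{\pi}{2(1+\cos\theta)}$. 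If $\theta\in(\arccos\tfrac2\pi,\tfrac\pi3)$ the unique critical point $\beta_1$ is a maximum (the slope turns from $+$ to $-$), and substituting $\arccos(1/\beta_1)=\tfrac{(\beta_1+\cos\theta)\sqrt{\beta_1^2-1}}{\beta_1(\beta_1\cos\theta+1)}$ into $h_-(\beta_1)$ eliminates the arccosine and yields $h_-(\beta_1)=\tfrac{\beta_1(\beta_1\cos\theta+1)}{2(\beta_1+\cos\theta)^2}$, hence $\ln\tfrac{\pi\beta_1(\beta_1\cos\theta+1)}{2(\beta_1+\cos\theta)^2}$. Continuity of the three formulas at $\theta=\arccos\tfrac2\pi$ (where $\beta_1\to\infty$) and at $\theta=\tfrac\pi3$ (where $\beta_1\to1^+$) is then a routine check.

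The main obstacle is the monotonicity of $\psi$ on $(0,\tfrac\pi2)$, as this is what separates the three regimes and establishes uniqueness of $\beta_1$. Writing $\psi=N/D$ with $N=\sin\alpha-\alpha\cos\alpha$ and $D=\alpha-\sin\alpha\cos\alpha$, one has $N'=\alpha\sin\alpha$ and $D'=2\sin^2\alpha$, so (dividing through by $\sin\alpha>0$) $\psi'>0$ reduces to the single-variable inequality $g(\alpha):=\alpha^2+\tfrac12\alpha\sin2\alpha+\cos2\alpha-1>0$ on $(0,\tfrac\pi2)$; since $g(0)=g'(0)=0$ and $g''(\alpha)=4\sin\alpha\,(\sin\alpha-\alpha\cos\alpha)>0$ there (the factor $\sin\alpha-\alpha\cos\alpha$ being positive as it vanishes at $0$ and has derivative $\alpha\sin\alpha>0$), this follows by integrating twice. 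Everything else — the $\lambda_1$-optimization, the boundary expansions of $h_-$, and the algebraic simplification of $h_-(\beta_1)$ — is routine.
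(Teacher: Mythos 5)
Your proposal is correct and follows essentially the same route as the paper: substitute $\lambda_1<0$ into the general form, optimize out $\lambda_1$ at $\lambda_1=-1/((\beta+\cos\theta)xy)$, reduce to a one-variable supremum of $h_-(\beta)$ via the identity $\pi+2\arctan(-1/\sqrt{\beta^2-1})=2\arccos(1/\beta)$, and split into the three regimes according to the sign of $h_-'$ at the endpoints, with the interior critical point characterized by~\eqref{eq:beta}. Your treatment is in fact slightly more complete than the paper's: the substitution $\beta=\sec\alpha$ and the monotonicity of $\psi(\alpha)=(\sin\alpha-\alpha\cos\alpha)/(\alpha-\sin\alpha\cos\alpha)$ (via $g''=4\sin\alpha(\sin\alpha-\alpha\cos\alpha)>0$) rigorously establishes the existence and uniqueness of $\beta_1$ and the absence of critical points outside $\theta\in(\arccos\tfrac{2}{\pi},\tfrac{\pi}{3})$, which the paper merely asserts from the boundary behavior of $h_-'$.
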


\begin{proof} 
We again start by substituting $\lambda_1 < 0$ into~\eqref{eq:2}:
\begin{align}
&\Lambda^*_{-}(x y \cos \theta, x^2, y^2) = \ln \pi + \frac{x^2}{2} + \frac{y^2}{2} + \sup_{\substack{\lambda_1 < 0 \\ \beta > 1}} \Big\{g_-(\lambda_1, \beta)\Big\}, \\
&g_{-}(\lambda_1, \beta) = (\cos \theta + \beta) \lambda_1 x y + \frac{\ln (\beta^2 - 1)}{2} + \ln (-\lambda_1) - \ln \Big(\pi + 2 \arctan\Big(\frac{-1}{\sqrt{\beta^2 - 1}}\Big)\Big). \nonumber
\end{align}
Differentiating w.r.t. $\lambda_1$ gives $(\cos \theta + \beta) x y + \frac{1}{\lambda_1}$. For $\lambda_1 \to -\infty$ this is positive, for $\lambda_1 \to 0^-$ this is negative, and so the maximum is at $\lambda_1 = -1 / ((\cos \theta + \beta) x y)$. Substituting this value for $\lambda_1$, and pulling out terms which do not depend on $\beta$ yields:
\begin{align}
&\Lambda^*_{-}(x y \cos \theta, x^2, y^2) = \ln\left(\frac{\pi}{2}\right) + \frac{x^2}{2} + \frac{y^2}{2} - 1 - \ln(x y) + \sup_{\beta > 1} \Big\{g_{-}(\beta)\Big\}, \\
&g_{-}(\beta) = \ln\left(\frac{\sqrt{\beta^2 - 1}}{(\cos \theta + \beta) \arccos \frac{1}{\beta}}\right) = \ln h_{-}(\beta). \nonumber
\end{align}
Above we used the identity $\pi + 2\arctan(-1/\sqrt{\beta^2 - 1}) = 2 \arccos \frac{1}{\beta}$, where the factor $2$ has been pulled outside the supremum. Now, differentiating $h_-$ w.r.t. $\beta$ results in:
\begin{align}
h_-'(\beta) = \frac{\beta \sqrt{\beta^2 - 1} (\beta \cos \theta + 1) \arccos \frac{1}{\beta} - \left(\beta^2 - 1\right) (\cos \theta + \beta)}{\beta \left(\beta^2 - 1\right) (\cos \theta + \beta)^2 \arccos \frac{1}{\beta}} \, .
\end{align}
Clearly the denominator is positive, while for $\beta \to 1^+$ the limit is negative iff $\cos \theta < \frac{1}{2}$. For $\beta \to \infty$ we further have $h_-'(\beta) \to 0^-$ for $\cos \theta \leq \frac{2}{\pi}$ and $h_-'(\beta) \to 0^{+}$ for $\cos \theta > \frac{2}{\pi}$. We therefore analyze three cases separately below.

\textbf{Case 1}: $\tfrac{\pi}{3} \leq \theta < \tfrac{\pi}{2}$. In this parameter range, $h_-'(\beta)$ is negative for all $\beta > 1$, and the supremum lies at $\beta \to 1^+$ with limiting value $h_-(\beta) \to \frac{1}{1 + \cos \theta}$. This yields the given expression for $\Lambda^*_-$.

\textbf{Case 2}: $\arccos \tfrac{2}{\pi} < \theta < \tfrac{\pi}{3}$. For $\theta$ in this range, $h_-'(\beta)$ is positive for $\beta \to 1^+$ and negative for $\beta \to \infty$, and changes sign exactly once, where it attains its maximum. After some rewriting, we find that this is at the value $\beta = \beta_1(\theta) \in (1, \infty)$ satisfying the relation from~\eqref{eq:beta}. 
Substituting this expression for $\arccos \frac{1}{\beta_1}$ into $h_-$, we obtain the result for $\Lambda^*_-$.

\textbf{Case 3}: $0 < \theta \leq \arccos \frac{2}{\pi}$. In this case $h_-'$ is positive for all $\beta > 1$, and the supremum lies at $\beta \to \infty$. For $\beta \to \infty$ we have $h_-(\beta) \to \tfrac{2}{\pi}$ (regardless of $\theta$) and we therefore get the final claimed result.
\end{proof}

\begin{proof}[Proof of Theorem~\ref{thm:main}] 
Combining the previous two results with Lemma~\ref{lem:ge2} and Equation~\ref{eq:split}, we obtain explicit asymptotics for $\Pr(\vc{Z}_d \approx (x y \cos \theta, \, x^2, \, y^2))$. What remains is a maximization over $x, y > 0$ of $p$, which translates to a minimization of $\Lambda^*$. As $\frac{x^2}{2} + \frac{y^2}{2} - 1 - \ln(x y)$ attains its minimum at $x = y = 1$ with value $0$, we obtain Theorem~\ref{thm:main}.
\end{proof}


\section{Proof of Proposition~\ref{prop:lattice}}

We will assume the reader is familiar with (the notation from)~\cite{laarhoven15crypto}. Let $t = 2^{c_t d + o(d)}$ denote the number of hash tables, and $n = (4/3)^{d/2 + o(d)}$. Going through the proofs of~\cite[Appendix A]{laarhoven15crypto} and replacing the explicit instantiation of the collision probabilities $(1 - \theta/\pi)$ by an arbitrary function $p(\theta)$, we get that the optimal number of hash functions concatenated into one function for each hash table, denoted $k$, satisfies
\begin{align}
k = \frac{\ln t}{-\ln p(\theta_1)} = \frac{c_t d}{d' \log_2(\pi/\sqrt{3})} \, .
\end{align} 
The latter equality follows when substituting $\theta_1 = \pi/3$ and substituting the collision probabilities for partial hypercube LSH in some dimension $d' \leq d$. As we need $k \geq 1$, the previous relation translates to a condition on $d'$ as $d' \leq \frac{c_t}{\log_2(\pi/\sqrt{3})} d$. As we expect the collision probabilities to be closer to those of full-dimensional hypercube LSH when $d'$ is closer to $d$, we replace the above inequality by an equality, and what remains is finding the minimum value $c_t$ satisfying the given constraints.

By carefully checking the proofs of \cite[Appendix A.2-A.3]{laarhoven15crypto}, the exact condition on $c_t$ to obtain the minimum asymptotic time complexity is the following:
\begin{align}
-c_n = \max_{\theta_2 \in (0, \pi)} \left\{\log_2 \sin \theta_2 + \frac{c_t}{\rho(\frac{\pi}{3}, \theta_2)}\right\}.
\end{align}
Here $c_n = \frac{1}{2} \log_2(\frac{4}{3}) \approx 0.20752$, and $\rho(\theta_1, \theta_2) = \ln p(\theta_1) / \ln p(\theta_2)$ corresponds to the exponent $\rho$ for given angles $\theta_1, \theta_2$. Note that in the above equation, only $c_t$ is an unknown. Substituting the asymptotic collision probabilities from Theorem~\ref{thm:main}, we find a solution at $c_t \approx 0.11464$, with maximizing angle $\theta_2 \approx 0.45739 \pi$. This corresponds to a time and space complexity of $(n \cdot t)^{1 + o(1)} = 2^{(c_n + c_t) d + o(d)} \approx 2^{0.32216 d + o(d)}$ as claimed.


\end{document}